    \definecolor{hks33}{RGB}{155,10,125}
    \definecolor{logobg}{RGB}{217,217,217}
\newcommand\useplaintitle{
    \def\@maketitle{
    	\newpage
    	\null
    	\vskip 2em%
    	\begin{center}%
    		\let \footnote \thanks
    		{\LARGE \@title \par}%
    		\vskip 1.5em%
    		{\large
    			\lineskip .5em%
    			\begin{tabular}[t]{c}%
    				\@author
    			\end{tabular}\par}%
    	\end{center}%
    	\par
    	\vskip 1em
    }
}
\DeclareTextFontCommand{\textbf}{\boldmath\bfseries}
\crefname{appsec}{Appendix}{Appendices}
\newcommand*{\currentname}{\@currentlabelname}\makeatother
\renewcommand{\emptyset}{\text{\O}}
\DeclarePairedDelimiter\ceil{\lceil}{\rceil}
\DeclarePairedDelimiter\floor{\lfloor}{\rfloor}
\DeclarePairedDelimiter\norm{\lVert}{\rVert}
\DeclarePairedDelimiter\abs{\lvert}{\rvert}
\newcommand{\op}[1]{\operatorname{#1}}
\newcommand{\oh}{\mathcal{O}}
\newcommand{\eps}{\varepsilon}
\newcommand{\integers}{\mathbb{Z}}
\newcommand{\cupdot}{\mathbin{\dot{\cup}}}
\newcommand{\NP}{\textsc{\textup{NP}}}
\DeclareMathOperator{\lcm}{lcm}
\title{Fuzzy Simultaneous Congruences}
\author{
	Max A. Deppert\footnote{Research supported by German Research Foundation (DFG) project JA 612/20-1}
	\\\small Kiel University
	\\\small Kiel, Germany
	\\\small made@informatik.uni-kiel.de
	\and
	Klaus Jansen\textsuperscript{\thefootnote}
	\\\small Kiel University
	\\\small Kiel, Germany
	\\\small kj@informatik.uni-kiel.de
	\and
	Kim-Manuel Klein
	\\\small Kiel University
	\\\small Kiel, Germany
	\\\small kmk@informatik.uni-kiel.de
}
\titleformat{\section}[runin]{\normalfont\bfseries}{\thesection}{0.5em}{}[.]
\titleformat{\subsection}[runin]{\normalfont\bfseries}{\thesubsection}{0.5em}{}[.]
\titleformat{\subsubsection}[runin]{\normalfont\bfseries}{\thesubsubsection}{0.5em}{}[.]
\titleformat{\paragraph}[runin]{\normalfont\bfseries}{\theparagraph}{0.5em}{}
\newtheorem{theorem}{Theorem}
\newtheorem{lemma}[theorem]{Lemma}
\newtheorem{observation}[theorem]{Observation}
\newtheorem{corollary}[theorem]{Corollary}
\Crefname{observation}{Observation}{Observations}
\newmdenv[leftmargin=1em,rightmargin=1em,innerleftmargin=0.4em,innerrightmargin=0.4em,innerbottommargin=1em,skipabove=1em,skipbelow=1em]{rectangle}
\providecommand{\keywords}[1]
{
  \small	
  \textbf{\textit{Keywords---}} #1
}
\begin{document}

\maketitle

\begin{abstract}
    We introduce a very natural generalization of the well-known problem of simultaneous congruences. Instead of searching for a positive integer $s$ that is specified by $n$ \emph{fixed remainders} modulo integer divisors $a_1,\dots,a_n$ we consider \emph{remainder intervals} $R_1,\dots,R_n$ such that $s$ is feasible if and only if $s$ is congruent to $r_i$ modulo $a_i$ for \emph{some} remainder $r_i$ in interval $R_i$ for all $i$.
	    
    This problem is a special case of a 2-stage integer program with only two variables per constraint which is is closely related to directed Diophantine approximation as well as the mixing set problem. We give a hardness result showing that the problem is NP-hard in general.
    
    By investigating the case of \emph{harmonic divisors}, i.e. $a_{i+1}/a_i$ is an integer for all $i<n$, which was heavily studied for the mixing set problem as well, we also answer a recent algorithmic question from the field of real-time systems. We present an algorithm to decide the feasibility of an instance in time $\oh(n^2)$ and we show that if it exists even the \emph{smallest} feasible solution can be computed in strongly polynomial time $\oh(n^3)$.
\end{abstract}
\keywords{Simultaneous congruences, Integer programming, Mixing Set, Real-time scheduling, Diophantine approximation}


\section{Introduction}

In the recent past there was a great interest in the so-called \emph{n-fold} IPs \cite{arXiv/2002.07745,DBLP:journals/mp/HemmeckeOR13,DBLP:conf/icalp/JansenLR19} and \emph{2-stage} IPs \cite{DBLP:journals/mp/HemmeckeS03,DBLP:conf/ipco/Klein20}. The matrix $\mathcal{A}$ of a 2-stage IP is constructed by blocks $A^{(1)}, \ldots, A^{(n)} \in \mathbb{Z}^{r \times k}$ and $B^{(1)}, \ldots, B^{(n)} \in \mathbb{Z}^{r \times t}$ as follows:
\begingroup\setlength\arraycolsep{2pt}
\[\mathcal{A}=\left(\begin{array}{ccccc}{A^{(1)}} & {B^{(1)}} & {0} & {\cdots} & {0} \\ {A^{(2)}} & {0} & {B^{(2)}} & {\ddots} & {\vdots} \\ {\vdots} & {\vdots} & {\ddots} & {\ddots} & {0} \\ {A^{(n)}} & {0} & {\cdots} & {0} & {B^{(n)}}\end{array}\right)\]
\endgroup
For an objective vector $c \in \integers_{\geq 0}^{k+nt}$, a right-hand side $b \in \integers^{nr}$, and bounds $\ell,u \in \integers_{\geq 0}^{k+nt}$ the 2-stage IP is formulated as
\begin{equation*}
\max \set{c^Tx|\mathcal{A} x=b, \;\ell \leq x \leq u,\; x \in \mathbb{Z}^{k+n t}}.
\end{equation*}
A special case of a 2-stage IP is given by the problem \textsc{Mixing Set} \cite{Conforti:2014:IP:2765770,DBLP:journals/siamdm/ConfortiSW07,DBLP:journals/mp/GunlukP01} (with only two variables in each constraint) where especially ${r=k=t=1}$ and ${A^{(1)} = \dots = A^{(n)}}$. Remark that 2-variable integer programming problems were extensively studied by various authors, e.g. \cite{DBLP:journals/algorithmica/Bar-YehudaR01,DBLP:conf/ipco/EisenbrandR01,DBLP:journals/siamcomp/Lagarias85}. The mixing set problem plays an important role for example in integer programming approaches for production planning \cite{Pochet:2006:PPM:1202598}. Given vectors $a,b\in\mathbb{Q}^n$ one aims to compute
\begin{equation}\label{mixing-set}
\min\set{f(s,x)|s+a_ix_i\geq b_i \forall i=1,\dots,n, (s,x)\in\mathbb{Z}_{\geq 0}\times\mathbb{Z}^n}
\end{equation}
for some objective function $f$. Conforti et al. \cite{DBLP:conf/ipco/ConfortiSW08} pose the question whether the problem can be solved in polynomial time for linear functions $f$. Unless P = NP this was ruled out by Eisenbrand and Rothvoß \cite{DBLP:conf/approx/EisenbrandR09} who proved that optimizing any linear function over a mixing set is NP-hard. However, the problem can be solved in polynomial time if $a_i = 1$ \cite{DBLP:journals/mp/GunlukP01,DBLP:journals/mp/MillerW03} or if the capacities $a_i$ fulfil a \emph{harmonic} property \cite{DBLP:journals/mp/ZhaoF08}, i.e. $a_{i+1}/a_i$ is integer for all $i < n$. The case of harmonic capacities was intensively studied - see \cite{DBLP:conf/ipco/ConfortiSW08,DBLP:journals/orl/ConfortiZ09} for simpler approaches.

	
More recently, real-time systems with harmonic tasks (the periods are pairs of integer multiples of each other) have received increased attention \cite{DBLP:conf/rtss/BonifaciMMW13} and also harmonic periods have been considered before \cite{DBLP:journals/jsa/AnssiKGT13,DBLP:conf/esa/EisenbrandKMNNSVW10,DBLP:conf/rtas/ShihGGCS03,d7c4f1a1-dce5-4ef9-a888-3089d808cec5}.
Now a recent manuscript in the field of real-time systems by Nguyen et al. \cite{arXiv/1912.01161} gives rise to the study of a new problem. They present an algorithm for the worst-case response time analysis of harmonic tasks with constrained release jitter running in polynomial time. The release jitter of a task is the maximum difference between the arrival times and the release times over all jobs of the task. Their algorithm uses heuristic components to solve an integer program that can be stated as a bounded version of the mixing set problem with additional upper bounds $B_i$ as follows.
	
\begin{rectangle}
	\textsc{Bounded Mixing Set (BMS)}\\
	Given capacities $a_1,\dots,a_n \in \mathbb{Z}$ and bounds $b,B \in \mathbb{Z}^n$ find $(s,x) \in \mathbb{Z}_{\geq 0} \times \mathbb{Z}^n$ such that
	\[b_i \leq s+a_i x_i \leq B_i \quad \forall i=1,\dots,n.\]
\end{rectangle}
    
In particular they depend on minimizing the value of $s$ which can be achieved in linear time in case of \textsc{Mixing Set}. See \Cref{mixing-set-minimize-s-is-simple} in the appendix for the short proof.
While BMS may look artificial at first sight it is not; in fact, leading to a very natural generalization it can be restated in the well-known form of \emph{simultaneous congruences}.
\begin{rectangle}
    \textsc{Fuzzy Simultaneous Congruences (FSC)}\\
    Given divisors $a_1,\dots,a_n \in \mathbb{Z}\setminus\!\{0\}$ and remainder intervals $R_1,\dots,R_n\subseteq \integers$\\ and an interval $S \subseteq \integers_{\geq 0}$ find a number $s \in S$ such that
    \[\exists \,r_i\in R_i: \,s \equiv r_i \!\!\!\pmod {a_i} \quad \forall i=1,\dots,n.\]
\end{rectangle}
Obviously, this also generalizes over the well-known problem of the Chinese Remainder Theorem. Here we give its generalized form (cf. \cite{DBLP:books/aw/Knuth81}).
\begin{theorem}[Generalized Chinese Remainder Theorem]
    Given divisors $a_1,\dots,a_n \in \mathbb{Z}_{\geq 1}$ and remainders $r_1,\dots,r_n \in \integers_{\geq 0}$ the system of $n$ simultaneous congruences
        \(s \equiv r_i \!\pmod {a_i}\)
    admits a solution $s \in \integers$ if and only if \(r_i\equiv r_j \!\pmod {\gcd(a_i,a_j)}\) for all $i\neq j$.
\end{theorem}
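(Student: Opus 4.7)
My plan is to prove both directions. The necessity direction is immediate: if $s \in \integers$ satisfies $s \equiv r_i \pmod{a_i}$ and $s \equiv r_j \pmod{a_j}$, then since $\gcd(a_i,a_j)$ divides both $a_i$ and $a_j$, both congruences reduce modulo $\gcd(a_i,a_j)$ to the same value of $s$, forcing $r_i \equiv r_j \pmod{\gcd(a_i,a_j)}$.

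For sufficiency I would proceed by induction on $n$. The $n=1$ case is trivial with $s = r_1$. The key step is $n = 2$: assuming $r_1 \equiv r_2 \pmod{\gcd(a_1,a_2)}$, Bezout's identity gives integers $\alpha,\beta$ with $\gcd(a_1,a_2) = \alpha a_1 + \beta a_2$; writing $r_2 - r_1 = k\cdot\gcd(a_1,a_2)$, the number $s = r_1 + k\alpha a_1 = r_2 - k\beta a_2$ is congruent to $r_1$ modulo $a_1$ and to $r_2$ modulo $a_2$. Any two solutions differ by a multiple of $\lcm(a_1,a_2)$, so the merged system is equivalent to a single congruence modulo $\lcm(a_1,a_2)$.

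For the inductive step with $n \geq 3$, I would apply the hypothesis to the first $n-1$ congruences (which are pairwise compatible) to obtain an integer $s'$ satisfying them, unique modulo $L := \lcm(a_1,\dots,a_{n-1})$. It then suffices to combine $s \equiv s' \pmod{L}$ with $s \equiv r_n \pmod{a_n}$ via the $n=2$ case, which requires verifying $s' \equiv r_n \pmod{\gcd(L,a_n)}$. For each $i<n$, combining $s' \equiv r_i \pmod{a_i}$ with the assumption $r_i \equiv r_n \pmod{\gcd(a_i,a_n)}$ yields $s' \equiv r_n \pmod{\gcd(a_i,a_n)}$, so $\gcd(a_i,a_n)$ divides $s'-r_n$ for every $i<n$. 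Hence $\lcm_{i<n}\gcd(a_i,a_n)$ divides $s'-r_n$.

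The main obstacle is identifying this last quantity with $\gcd(L,a_n)$, i.e.\ the lattice distributivity identity $\gcd(\lcm(a_1,\dots,a_{n-1}),a_n) = \lcm(\gcd(a_1,a_n),\dots,\gcd(a_{n-1},a_n))$. I would prove it by comparing $p$-adic valuations for each prime $p$, where it reduces to $\min(\max_{i<n} v_p(a_i),\,v_p(a_n)) = \max_{i<n}\min(v_p(a_i),\,v_p(a_n))$, a consequence of the distributivity of $\min$ and $\max$ on $\naturals$. Once this identity is in hand, the inductive step closes and the theorem follows.
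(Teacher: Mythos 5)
Your proof is correct. Note that the paper itself gives no proof of this theorem --- it is stated as a known result with a citation to Knuth --- so there is nothing to compare against; your argument (necessity by reducing both congruences modulo $\gcd(a_i,a_j)$, sufficiency by induction using the two-congruence Bezout merge together with the distributivity identity $\gcd(\lcm(a_1,\dots,a_{n-1}),a_n)=\lcm(\gcd(a_1,a_n),\dots,\gcd(a_{n-1},a_n))$ verified via $p$-adic valuations) is the standard one and is complete.
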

Furthermore, Leung and Whitehead \cite{DBLP:journals/pe/LeungW82} showed that $k$-Simultaneous Congruences ($k$-SC) is \NP-complete in the weak sense.
Given divisors $a_1,\dots,a_n \in \mathbb{Z}_{\geq 1}$ and remainders $r_1,\dots,r_n \in \integers_{\geq 0}$ the task is to find a number $s \in \integers_{\geq 0}$ and a subset $I \subseteq \{1,\dots,n\}$ with $|I|=k$ s.t.
    \(s \equiv r_i \!\!\!\pmod {a_i}\) for all $i\in I$.
Later it was shown by Baruah et al. \cite{DBLP:journals/rts/BaruahRH90} that \textsc{$k$-SC} also is \NP-complete in the strong sense.

Both problems BMS and FSC are interchangeable formulations of the same problem (see \Cref{sec:notation-and-properties}). Therefore, we will use them as synonyms and we especially assume formally that $R_i = [b_i,B_i]$.
Interestingly and to the best of our knowledge, FSC/BMS was not considered before. However, the investigation of simultaneous congruences has always been of transdisciplinary interest connecting a variety of fields and applications, e.g. \cite{DBLP:conf/focs/AgrawalB99,DBLP:conf/stoc/GoldreichRS99,DBLP:conf/focs/GuruswamiSS00}. 




\paragraph*{Our Contribution}
\begin{enumerate}[(a)]
    \item
    We show that BMS is \NP-hard for general capacities $a_i$. For the proof we refer to the appendix. Compared to the mixing set problem this is a stronger hardness result as BMS by itself only asks for an \emph{arbitrary} feasible solution. However, every mixing set problem may be solved by $s=\norm{b}_{\infty}, x=\mathbf{0}$.
    \item
    In the case of harmonic capacities (i.e. $a_{i+1}/a_i$ is an integer for all $i < n$), which was heavily studied for the mixing set problem as mentioned before, we give an algorithm exploiting a merge idea based on modular arithmetics on intervals to decide the feasibility problem of FSC in time $\oh(n^2)$. See \Cref{harmonic-feasibility} for the details.
    \item
    Furthermore, for a feasible instance of FSC with harmonic capacities we present a polynomial algorithm as well as a strongly polynomial algorithm to compute 
    the smallest feasible solution to FSC in time $\oh(\min\{n^2\log(a_n),n^3\}) \leq \oh(n^3)$. See \Cref{harmonic-optimization} for the details.
    \item
    Our algorithm gives a strongly polynomial replacement for the heuristic component (which may fail to compute a solution) in the algorithm of Nguyen et al. \cite{arXiv/1912.01161}. However, we present an algorithm to solve the problem in linear time. See \Cref{real-time-scheduling} for the details.
\end{enumerate}

\section{Notation and General Properties}
\label{sec:notation-and-properties}

For the sake of readability we write $X^{[\alpha]} = (X \bmod \alpha)$ for numbers $X$ as well as $X^{[\alpha]} = \set{z \bmod \alpha|z\in X}$ for sets $X$ (of numbers) to denote the modular projection of some number or interval, respectively.
Extending the usual notation we also write $X \equiv Y \pmod \alpha$ if $X^{[\alpha]} = Y^{[\alpha]}$ for sets $X, Y$. Notice that on the one hand $(X \cup Y)^{[\alpha]} = X^{[\alpha]} \cup Y^{[\alpha]}$ but on the other hand be aware that $(X \cap Y)^{[\alpha]} \neq X^{[\alpha]} \cap Y^{[\alpha]}$ in general (cf. \Cref{modular-intersections}). \Cref{fig:modular-projection} depicts the structure of $v^{[\alpha]}$ if $v=[\ell_v,u_v]$ is an interval in $\integers$.

Also we use the well-tried notation $t + X = \set{t+z|z\in X}$ to express the \emph{translation} of a set of numbers $X$ by some number $t$. For a set of sets $\mathcal{S}$ we write $\bigcup \mathcal{S}$ to denote the union $\bigcup_{S \in \mathcal{S}} S$. Furthermore, we identify constraints by their indices. So for $i \leq n$ we say that
\enquote{$b_i \leq s+a_ix_i \leq B_i$} \emph{is} constraint $i$.

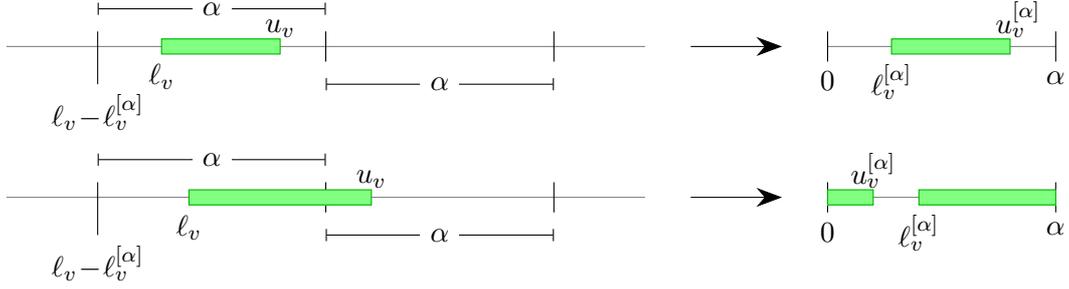
\begin{figure}
    \centering
	\begin{tikzpicture}[xscale=1.2]
        \usetikzlibrary{arrows.meta}
\draw [gray] (-1,2) 
-- (6,2);
\draw (0,1.5) node [below] {$\ell_v \!-\! \ell_v^{[\alpha]}$} -- (0,2.2);
\draw (2.5,1.8) node [below] {} -- (2.5,2.2);
\draw (5,1.8) node [below] {} -- (5,2.2);
\draw [|-|] (0,2.5) -- node[fill=white] {$\alpha$} (2.5,2.5);
\draw [|-|] (2.5,1.5) -- node[fill=white] {$\alpha$} (5,1.5);
\fill [green!50!white,draw=green!70!black] (0.7,1.9) node [below,black] {$\ell_v$} rectangle (2,2.1) node [above=-3pt,black] {$u_v$};
\draw[-{Stealth[scale=2]}] (6.5,2) -- (7.5,2);
\draw [gray] (8,2) -- (10.5,2);
\draw (8,1.8) node [below] {$0$} -- (8,2.2);
\draw (10.5,1.8) node [below] {$\alpha$} -- (10.5,2.2);
\fill [green!50!white,draw=green!70!black] (8.7,1.9) node [below,black] {$\ell_v^{[\alpha]}$} rectangle (10,2.1) node [right=3pt,above=-3pt,black] {$u_v^{[\alpha]}$};
\draw [gray] (-1,0) 
-- (6,0);
\draw (0,-0.5) node [below] {$\ell_v \!-\! \ell_v^{[\alpha]}$} -- (0,0.2);
\draw (2.5,-0.2) node [below] {} -- (2.5,0.2);
\draw (5,-0.2) node [below] {} -- (5,0.2);
\draw [|-|] (0,0.5) -- node[fill=white] {$\alpha$} (2.5,0.5);
\draw [|-|] (2.5,-0.5) -- node[fill=white] {$\alpha$} (5,-0.5);
\fill [green!50!white,draw=green!70!black] (1,-0.1) node [below,black] {$\ell_v$} rectangle (3,0.1) node [above=-3pt,black] {$u_v$};
\draw[-{Stealth[scale=2]}] (6.5,0) -- (7.5,0);
\draw [gray] (8,0) -- (10.5,0);
\draw (8,-0.2) node [below] {$0$} -- (8,0.2);
\draw (10.5,-0.2) node [below] {$\alpha$} -- (10.5,0.2);
\fill [green!50!white,draw=green!70!black] (8,-0.1) rectangle (8.5,0.1) node [above=-3pt,black] {$u_v^{[\alpha]}$};
\fill [green!50!white,draw=green!70!black] (9,-0.1) node [below,black] {$\ell_v^{[\alpha]}$} rectangle (10.5,0.1);
	\end{tikzpicture}
    \caption{The two possibilities for the modular projection of an interval}
    \label{fig:modular-projection}
\end{figure}

\paragraph*{Identity of BMS and FSC}

In fact, BMS allows zero capacities while FSC cannot allow zero divisors since $(\mathrm{mod}\ 0)$ is undefined. However, consider a constraint $i$ of BMS with $a_i \neq 0$.
Let $b_i \leq s +a_ix_i \leq B_i$ be satisfied and set $r_i = s +a_i x_i$. Then $r_i^{[a_i]} = s^{[a_i]}$ and $r_i \in [b_i,B_i]=R_i$.
Vice-versa let $r_i \in R_i$ s.t. $r_i \equiv s \pmod {a_i}$. Then there is an $x_i \in \mathbb{Z}$ s.t. $s+a_ix_i = r_i \in R_i = [b_i,B_i]$.

A constraint $i$ that holds $a_i = 0$ simply requires that $s \in R_i$. Hence, if $a_i = a_j = 0$ for two constraints $i\neq j$ they can be replaced by one new constraint $k$ defined by $R_k = R_i \cap R_j$. Therefore, one may assume that there is at most one constraint $i$ with a zero capacity $a_i$.
As all our results can be lifted to the general case with low effort we will assume in terms of BMS that all capacities are non-zero and for FSC we take the equivalent assumption that $S=\integers_{\geq 0}$.\\

With our notation we may easily express the feasibility of a value $s$ for a single constraint $i$ as follows.
\begin{observation}\label{s-feasible-for-one-constraint}
    A value $s$ satisfies constraint $i$ if and only if $s^{[a_i]} \in R_i^{[a_i]}$.
\end{observation}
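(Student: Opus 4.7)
The plan is to show both directions by simply unfolding the definitions introduced just above the observation. Recall that the $i$-th constraint in FSC reads $\exists\, r_i \in R_i : s \equiv r_i \pmod{a_i}$, which by the definition $X^{[\alpha]} = X \bmod \alpha$ is equivalent to $\exists\, r_i \in R_i : s^{[a_i]} = r_i^{[a_i]}$. Then I would invoke the set-valued convention $R_i^{[a_i]} = \{z \bmod a_i : z \in R_i\}$ directly: the existence of such an $r_i$ is nothing other than membership $s^{[a_i]} \in R_i^{[a_i]}$. This already closes both implications in one line each.

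Concretely, I would write: if $s$ satisfies constraint $i$, pick the witnessing $r_i \in R_i$; then $r_i^{[a_i]} \in R_i^{[a_i]}$ by definition of the set projection, and $s^{[a_i]} = r_i^{[a_i]}$ by congruence, so $s^{[a_i]} \in R_i^{[a_i]}$. Conversely, if $s^{[a_i]} \in R_i^{[a_i]}$, the definition of $R_i^{[a_i]}$ provides some $r_i \in R_i$ with $r_i^{[a_i]} = s^{[a_i]}$, i.e. $s \equiv r_i \pmod{a_i}$, which witnesses the constraint.

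There is no real obstacle here; the statement is a bookkeeping lemma whose purpose is to fix notation that will be used throughout the paper. The only thing to be mildly careful about is the fact (mentioned in the paragraph about the identity of BMS and FSC) that we are working under $a_i \neq 0$, so that $(\bmod\, a_i)$ and the projections $X^{[a_i]}$ are well-defined; no parallel case for $a_i = 0$ needs to be addressed here.
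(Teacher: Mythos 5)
Your proof is correct and is essentially identical to the paper's: both unfold the constraint as $\exists\, r_i \in R_i : r_i^{[a_i]} = s^{[a_i]}$ and then read off membership in the set projection $R_i^{[a_i]}$. The only difference is presentational — you split the single chain of equivalences into two explicit directions.
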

\begin{proof}
    $\exists r_i \in R_i: r_i \equiv s \pmod {a_i}$
    iff $\exists r_i \in R_i: r_i^{[a_i]} = s^{[a_i]}$ iff
    $s^{[a_i]} \in R_i^{[a_i]}$.
\end{proof}

By simply swapping the signs of the $x_i$ we may assume that $a_i \geq 0$ for all $i$.
We may also assume that the intervals are small in the sense that $B_i - b_i + 1 < a_i$ holds for all $i$. Assume that $B_i - b_i + 1\geq a_i$ for an $i$ and let $s \geq 0$ be an arbitrary integer. Then $b_i \leq B_i - a_i + 1$ and constraint $i$ may always be solved by setting $x_i = \ceil{(b_i-s)/a_i}$ which satisfies
\[b_i
\leq s + a_i\underbrace{\ceil{\tfrac{b_i-s}{a_i}}}_{x_i}
\leq s + a_i\ceil{\tfrac{B_i-a_i+1-s}{a_i}}
= s + a_i\floor{\tfrac{B_i-s}{a_i}}
\leq B_i.\]
Hence, constraint $i$ is redundant and may be omitted. As a direct consequence there can be at most one feasible value for each $x_i$ for a given guess $s$.
In fact, 
we can decide the feasibility of a guess $s$ in time $\oh(n)$ as for all constraints $i$ and values $x_i$ it holds
$b_i \leq s + a_i x_i \leq B_i$ if and only if $\ceil{(b_i-s)/a_i} = x_i = \floor{(B_i-s)/a_i}$.
So a guess $s$ is feasible if and only if $\ceil{(b_i-s)/a_i} = \floor{(B_i-s)/a_i}$ holds for all constraints $i$. By $s_{\min}$ we denote the smallest feasible solution $s$ that satisfies all constraints.

\begin{observation}\label{s-min-at-most-lcm}
	For feasible instances it holds that $\op{s_{\min}} < \lcm(a_1,\dots,a_n)$.
\end{observation}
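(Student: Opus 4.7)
The plan is a one-step modular reduction argument. Setting $L := \lcm(a_1,\dots,a_n)$, I would start from any feasible value $s$ (which exists by hypothesis of the observation) and consider the reduced value $s' := s \bmod L \in \{0,1,\dots,L-1\}$. Since $0 \leq s' < L$, it suffices to verify that $s'$ remains feasible; then $s_{\min} \leq s' < L$ follows immediately.

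For the feasibility verification, the key point is that every $a_i$ divides $L$, so $s' \equiv s \pmod{a_i}$, and hence $s'^{[a_i]} = s^{[a_i]}$ for every $i$. By the preceding \Cref{s-feasible-for-one-constraint}, constraint $i$ is satisfied by a value $t$ exactly when $t^{[a_i]} \in R_i^{[a_i]}$, a property that depends only on the residue class of $t$ modulo $a_i$. Since $s$ satisfied each constraint and the residues modulo each $a_i$ are unchanged by reduction modulo $L$, the reduced value $s'$ satisfies all $n$ constraints as well.

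I do not anticipate any real obstacle here. The argument is an essentially algebraic consequence of the fact that FSC-feasibility is invariant under translation of $s$ by any multiple of $L$. The only mild thing to note is that $s' \geq 0$, so it still lies in the domain $S = \integers_{\geq 0}$ imposed on $s$; this is immediate from the nonnegativity of the usual remainder operator and the standing assumption (made just before the observation) that $S = \integers_{\geq 0}$.
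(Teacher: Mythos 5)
Your proof is correct and rests on the same key fact as the paper's: since each $a_i$ divides $\lcm(a_1,\dots,a_n)$, shifting $s$ down by (a multiple of) the lcm preserves feasibility of every constraint. The paper phrases this as a contradiction argument that subtracts one copy of the lcm from $s_{\min}$ and adjusts the $x_i$ explicitly, whereas you reduce an arbitrary feasible $s$ modulo the lcm and invoke \Cref{s-feasible-for-one-constraint}; the difference is purely presentational.
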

\begin{proof}
	Let $\varphi = \lcm(a_1,\dots,a_n)$. Remark that $\varphi/a_i$ is integral for all $i$. Assume that $(s,x)$ is a solution with $s=s_{\min} \geq \varphi$. Let $t = s-\varphi$ and $y_i = x_i+\varphi/a_i$ f.a. $i$. Then $0 \leq t < s_{\min}$ and
	\(t+a_iy_i = s+a_ix_i\) f.a. $i$. So $(t,y)$ is a solution that contradicts the optimality.
\end{proof}

\section{Harmonic Divisors}

Here we consider harmonic divisors in the sense that $a_{i+1}/a_i$ is an integer for all $i < n$. We present an algorithm to decide the feasibility of an instance of FSC. Also we show how optimal solutions can be computed in (strongly) polynomial time. Both of these results are based on the fine-grained interconnection between modular arithmetics on sets and the harmonic property.
For some intuition \Cref{fig:lattice-idea-example} gives a perspective on $s$ as an anchor for $1$-dimensional lattices with basis $a_i$ which have to \enquote{hit} the intervals $R_i$. For example, in the figure it holds that $s + a_2\cdot(-1) = s - a_2 \in R_2$, so the 1-dimensional lattice $(s+a_2z)_{z \in \integers}$ hits interval $R_2$. Therefore, the choice of $s$ satisfies constraint $2$.

\subsection{Deciding feasibility}
\label{harmonic-feasibility}

The idea for our first algorithm will be to decide the feasibility problem by iteratively computing modular projections from constraint $i=n$ down to $i=1$.
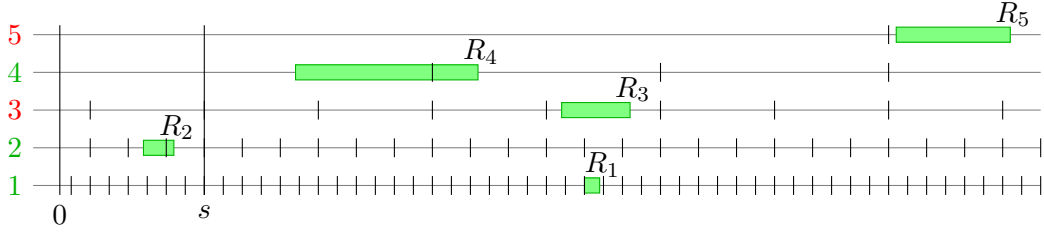
\begin{figure}
    \centering
    \begin{tikzpicture}
        \def\x{1}
        \draw [gray] (-0.25*\x,0.5) node[left=0.3,green!70!black] {$1$}  -- (\x*13,0.5);
\draw [gray] (-0.25*\x,1.0) node[left=0.3,green!70!black] {$2$} -- (\x*13,1.0);
\draw [gray] (-0.25*\x,1.5) node[left=0.3,red] {$3$} -- (\x*13,1.5);
\draw [gray] (-0.25*\x,2.0) node[left=0.3,green!70!black] {$4$} -- (\x*13,2.0);
\draw [gray] (-0.25*\x,2.5) node[left=0.3,red] {$5$}-- (\x*13,2.5);
\fill [green!50!white,draw=green!70!black] (7*\x,0.5-0.1) rectangle (7.2*\x,0.5+0.1) node [black,xshift=1,yshift=5] {$R_1$};
\fill [green!50!white,draw=green!70!black] (1.2*\x,1-0.1) rectangle (1.6*\x,1+0.1) node [black,xshift=1,yshift=5] {$R_2$};
\fill [green!50!white,draw=green!70!black] (6.7*\x,1.5-0.1) rectangle (7.6*\x,1.5+0.1) node [black,xshift=1,yshift=5] {$R_3$};
\fill [green!50!white,draw=green!70!black] (3.2*\x,2-0.1) rectangle (5.6*\x,2+0.1) node [black,xshift=1,yshift=5] {$R_4$};
\fill [green!50!white,draw=green!70!black] (11.1*\x,2.5-0.1) rectangle (12.6*\x,2.5+0.1) node [black,xshift=1,yshift=5] {$R_5$};
\foreach \z in {0.25,0.5,...,13} \draw (\x*\z,0.375) -- (\x*\z,0.625);
\foreach \z in {0.5,1.0,...,13} \draw (\x*\z,0.5+0.375) -- (\x*\z,0.5+0.625);
\foreach \z in {0.5,2.0,...,13} \draw (\x*\z,1.0+0.375) -- (\x*\z,1.0+0.625);
\foreach \z in {5.0,8.0,...,13} \draw (\x*\z,1.5+0.375) -- (\x*\z,1.5+0.625);
\foreach \z in {11.0} \draw (\x*\z,2+0.375) -- (\x*\z,2+0.625);
\draw (0.1*\x,0.375) node[below] {$0$} -- (0.1*\x,2.625);
\draw (2*\x,0.375) node[below] {$s$} -- (2*\x,2.625);
    \end{tikzpicture}
    \caption{$36a_1 \!=\! 18a_2 \!=\! 6a_3 \!=\! 3a_4 \!=\! a_5$. The guess $s$ is not feasible for constr. $3$ and $5$}
    \label{fig:lattice-idea-example}
\end{figure}
In the following we will say that an interval $w \subseteq \integers$ \emph{represents} of a set $M \subseteq \integers$ (modulo $\alpha$) if $w^{[\alpha]} = M^{[\alpha]}$. Also a set of intervals $\mathcal{R}$ represents a set $M \subseteq \integers$ (modulo $\alpha$) if $M^{[\alpha]} = \bigcup_{w\in \mathcal{R}} w^{[\alpha]}$. 
Given an integer $\alpha \geq 1$ and two intervals $v$, $w$ we depend on the structure of the intersection $v^{[\alpha]} \cap w^{[\alpha]} \subseteq [0,\alpha)$.
To express it let $v = [\ell_v,u_v], w = [\ell_w,u_w]$ and we define the basic intervals
\[
    \varphi_{\alpha}(v,w) = [\ell_v^{[\alpha]},u_w^{[\alpha]}]
    \quad\text{and}\quad
    \psi_{\alpha}(v,w) = [\max\{\ell_v^{[\alpha]},\ell_w^{[\alpha]}\},\alpha+\min\{u_v^{[\alpha]},u_w^{[\alpha]}\}]
\]
for all intervals $v$, $w$. Remark that $\psi_{\alpha}(w,v) = \psi_{\alpha}(v,w)$ is always true.

\begin{lemma}\label{alpha-intersection-possibilities}
    Given an integer $\alpha \geq 1$ and two intervals $v, w \subseteq \integers$ it holds that
    \begin{align*}
        v^{[\alpha]} &\cap w^{[\alpha]} \in \{\;\;
         \emptyset,\;\;
        v^{[\alpha]},\;\;
        w^{[\alpha]},\;\;
        \psi_{\alpha}(v,w)^{[\alpha]},\;\;
        \varphi_{\alpha}(v,w),\;\;
        \varphi_{\alpha}(w,v),\\
        & \varphi_{\alpha}(v,w) \cupdot \varphi_{\alpha}(w,v),\;\;
        \varphi_{\alpha}(v,w) \cupdot \psi_{\alpha}(v,w)^{[\alpha]},\;\;
        \varphi_{\alpha}(w,v) \cupdot \psi_{\alpha}(v,w)^{[\alpha]}
        \;\;\}.
    \end{align*}
\end{lemma}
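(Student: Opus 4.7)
The plan is a case analysis on the qualitative shape of the two projections. Recall from \Cref{fig:modular-projection} that every modular projection $v^{[\alpha]}$ takes one of two shapes: (i) a single sub-interval $[\ell_v^{[\alpha]},u_v^{[\alpha]}]$ of $[0,\alpha)$, which happens when $v$ fits inside one period $[k\alpha,(k+1)\alpha)$, or (ii) a two-piece union $[0,u_v^{[\alpha]}]\cupdot[\ell_v^{[\alpha]},\alpha-1]$, which happens when $v$ strictly crosses a multiple of $\alpha$; the extreme case where $v$ has length at least $\alpha$ gives the whole residue set and can be viewed as a degenerate instance of case (ii). I call these cases \emph{non-wrapping} and \emph{wrapping}, respectively, and the same dichotomy applies to $w$.

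If neither $v$ nor $w$ wraps, both projections are honest intervals in $[0,\alpha)$ and the intersection, when non-empty, is the single interval $[\max\{\ell_v^{[\alpha]},\ell_w^{[\alpha]}\},\min\{u_v^{[\alpha]},u_w^{[\alpha]}\}]$; depending on which endpoints attain the maximum and minimum, this equals one of $v^{[\alpha]}$, $w^{[\alpha]}$, $\varphi_\alpha(v,w)$, or $\varphi_\alpha(w,v)$. If exactly one of $v,w$ wraps, say $w$, I distribute the intersection over the two pieces of $w^{[\alpha]}$; each piece-wise intersection reduces to the non-wrapping case and is therefore empty, equal to $v^{[\alpha]}$, or equal to $\varphi_\alpha(v,w)$ or $\varphi_\alpha(w,v)$. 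Taking the disjoint union of the two contributions yields one of the listed forms, with $\varphi_\alpha(v,w)\cupdot\varphi_\alpha(w,v)$ arising exactly when both pieces contribute non-trivially.

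The main case is when both $v$ and $w$ wrap. Here I split $v^{[\alpha]}\cap w^{[\alpha]}$ into four piece-wise intersections: the two \emph{aligned} pairs $[0,u_v^{[\alpha]}]\cap[0,u_w^{[\alpha]}]$ and $[\ell_v^{[\alpha]},\alpha-1]\cap[\ell_w^{[\alpha]},\alpha-1]$, and the two \emph{cross} pairs $[0,u_v^{[\alpha]}]\cap[\ell_w^{[\alpha]},\alpha-1]$ and $[\ell_v^{[\alpha]},\alpha-1]\cap[0,u_w^{[\alpha]}]$. The aligned contributions combine to $[0,\min\{u_v^{[\alpha]},u_w^{[\alpha]}\}]\cupdot[\max\{\ell_v^{[\alpha]},\ell_w^{[\alpha]}\},\alpha-1]$, which is precisely $\psi_\alpha(v,w)^{[\alpha]}$. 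The cross contributions, when non-empty, equal $\varphi_\alpha(w,v)$ and $\varphi_\alpha(v,w)$ respectively. A short inspection then shows that every combination of these four pieces either collapses, by containment, to $v^{[\alpha]}$ or $w^{[\alpha]}$, or matches one of $\psi_\alpha(v,w)^{[\alpha]}$, $\varphi_\alpha(v,w)\cupdot\psi_\alpha(v,w)^{[\alpha]}$, or $\varphi_\alpha(w,v)\cupdot\psi_\alpha(v,w)^{[\alpha]}$, exhausting the listed forms.

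The main obstacle I expect is the bookkeeping in this last case: one cross contribution may be non-empty while the other vanishes, and the interaction with the aligned contributions has to be tracked so that the resulting union is correctly identified with one of the enumerated forms. A clean way to organise the argument is to view $v^{[\alpha]}$ and $w^{[\alpha]}$ as arcs on the cyclic group $\integers/\alpha\integers$, where structurally the intersection of two arcs is always empty, a single arc, or two arcs -- mirroring the three qualitative shapes appearing on the right-hand side of the claim.
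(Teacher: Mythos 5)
Your proposal is correct and follows essentially the same route as the paper: the wrapping/non-wrapping dichotomy is exactly the paper's case split on whether $v^{[\alpha]}$ and $w^{[\alpha]}$ are single intervals or two-piece unions, and your subcases (which cross/aligned pieces contribute) correspond one-to-one with the paper's Cases 1.1--1.3, 2.1--2.3, and 3, including the observation that both cross contributions cannot be simultaneously non-empty when both projections wrap. The only difference is presentational -- you distribute the intersection over the constituent pieces rather than enumerating endpoint inequalities directly -- so no further comparison is needed.
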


\begin{figure}
    \centering
    \begin{tikzpicture}[xscale=.11,yscale=.1]
        \def\a{28}\def\b{25}
        \begin{scope}[shift={(0,0)}]
  \draw (0,-8) node [below] {$0$} -- (0,8);
  \draw (20,-8) node [below] {$\alpha$} -- (20,8);
  \draw (0,5-1) rectangle (8,5+1); \draw (14,5-1) rectangle (20,5+1);
  \draw (9,-1) rectangle (12,1);
  \node [draw] at (10,-10) {0};

  \node [left] at (0,5) {$w^{[\alpha]}$:};
  \node [left] at (0,0) {$v^{[\alpha]}$:};
  \node [left] at (0,-5) {$v^{[\alpha]} \cap w^{[\alpha]}$:};
\end{scope}
\begin{scope}[shift={(1*\a,0)}]
  \draw (0,-8) node [below] {$0$} -- (0,8);
  \draw (20,-8) node [below] {$\alpha$} -- (20,8);
  \draw (0,5-1) rectangle (8,5+1); \draw (14,5-1) rectangle (20,5+1);
  \draw (0,-1) rectangle (6,1);\draw (10,-1) rectangle (20,1);
  \draw (0,-5-1) rectangle (6,-5+1);\draw (14,-5-1) rectangle (20,-5+1);
  \node [draw] at (10,-10) {1.1};
\end{scope}
\begin{scope}[shift={(2*\a,0)}]
  \draw (0,-8) node [below] {$0$}  -- (0,8);
  \draw (20,-8) node [below] {$\alpha$}  -- (20,8);
  \draw (0,5-1) rectangle (13,5+1);\draw (16,5-1) rectangle (20,5+1);
  \draw (0,-1) rectangle (6,1);\draw (10,-1) rectangle (20,1);
  \draw (0,-5-1) rectangle (6,-5+1);\draw (10,-5-1) rectangle (13,-5+1);\draw (16,-5-1) rectangle (20,-5+1);
  \node [draw] at (10,-10) {1.2};
\end{scope}
\begin{scope}[shift={(3*\a,0)}]
  \draw (0,-8) node [below] {$0$}  -- (0,8);
  \draw (20,-8) node [below] {$\alpha$}  -- (20,8);
  \draw (0,5-1) rectangle (6,5+1);\draw (10,5-1) rectangle (20,5+1);
  \draw (0,-1) rectangle (13,1);\draw (16,-1) rectangle (20,1);
  \draw (0,-5-1) rectangle (6,-5+1);\draw (10,-5-1) rectangle (13,-5+1);\draw (16,-5-1) rectangle (20,-5+1);
  \node [draw] at (10,-10) {1.3};
\end{scope}
\begin{scope}[shift={(0,-1*\b)}]
  \draw (0,-8) node [below] {$0$}  -- (0,8);
  \draw (20,-8) node [below] {$\alpha$}  -- (20,8);
  \draw (0,5-1) rectangle (8,5+1);\draw (16,5-1) rectangle (20,5+1);
  \draw (5,-1) rectangle (14,1);
  \draw (5,-5-1) rectangle (8,-5+1);
  \node [draw] at (10,-10) {2.1};

  \node [left] at (0,5) {$w^{[\alpha]}$:};
  \node [left] at (0,0) {$v^{[\alpha]}$:};
  \node [left] at (0,-5) {$v^{[\alpha]} \cap w^{[\alpha]}$:};
\end{scope}
\begin{scope}[shift={(1*\a,-1*\b)}]
  \draw (0,-8) node [below] {$0$}  -- (0,8);
  \draw (20,-8) node [below] {$\alpha$}  -- (20,8);
  \draw (0,5-1) rectangle (8,5+1);\draw (16,5-1) rectangle (20,5+1);
  \draw (11,-1) rectangle (18,1);
  \draw (16,-5-1) rectangle (18,-5+1);
  \node [draw] at (10,-10) {2.2};
\end{scope}
\begin{scope}[shift={(2*\a,-1*\b)}]
  \draw (0,-8) node [below] {$0$}  -- (0,8);
  \draw (20,-8) node [below] {$\alpha$}  -- (20,8);
  \draw (0,5-1) rectangle (9,5+1);\draw (15,5-1) rectangle (20,5+1);
  \draw (6,-1) rectangle (17,1);
  \draw (6,-5-1) rectangle (9,-5+1);\draw (15,-5-1) rectangle (17,-5+1);
  \node [draw] at (10,-10) {2.3};
\end{scope}
\begin{scope}[shift={(3*\a,-1*\b)}]
  \draw (0,-8) node [below] {$0$}  -- (0,8);
  \draw (20,-8) node [below] {$\alpha$}  -- (20,8);
  \draw (3,5-1) rectangle (13,5+1);
  \draw (9,-1) rectangle (18,1);
  \draw (9,-5-1) rectangle (13,-5+1);
  \node [draw] at (10,-10) {3};
\end{scope}
    \end{tikzpicture}
    \caption{Examples for the cases of the case distinction in the proof of \Cref{alpha-intersection-possibilities}}
    \label{fig:intersection-distinction}
\end{figure}
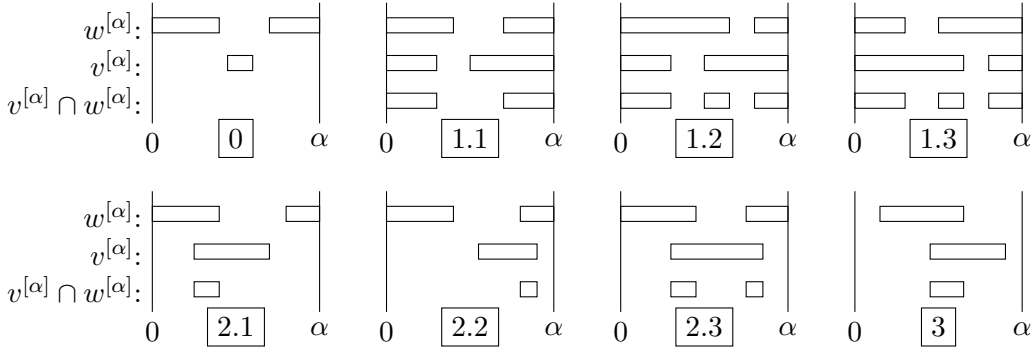

The important intuition is that such a \enquote{modulo $\alpha$ intersection} can always be represented by at most two intervals. Remark that the sets in the second row are the only ones which are represented by $2>1$ intervals.
\begin{proof}
    We do a case distinction (see \Cref{fig:intersection-distinction}) as follows.
    We only look at the non-trivial case, i.e. $v^{[\alpha]} \cap w^{[\alpha]} \notin \set{\emptyset,v^{[\alpha]},w^{[\alpha]}}$, which especially implies $\abs{v} < \alpha$ and $\abs{w} < \alpha$. 
    
    We start with the case that neither $v^{[\alpha]}$ nor $w^{[\alpha]}$ is an interval, i.e. $u_v^{[\alpha]} < \ell_v^{[\alpha]}$ and $u_w^{[\alpha]} < \ell_w^{[\alpha]}$. 
    Then it cannot be that $u_w^{[\alpha]} \geq \ell_v^{[\alpha]}$ and $u_v^{[\alpha]} \geq \ell_w^{[\alpha]}$ since that implies $\ell_v^{[\alpha]} \leq u_w^{[\alpha]} < \ell_w^{[\alpha]} \leq u_v^{[\alpha]}$. Hence, there are three cases as follows.
    
    \noindent\emph{Case 1.1}. $u_w^{[\alpha]} < \ell_v^{[\alpha]}$ and $u_v^{[\alpha]} < \ell_w^{[\alpha]}$. Then the intersection equals
    \begin{align*}
        [0,\min\{u_v^{[\alpha]},u_w^{[\alpha]}\}] \cupdot [\max\{\ell_v^{[\alpha]},\ell_w^{[\alpha]}\},\alpha)
        &= [\max\{\ell_v^{[\alpha]},\ell_w^{[\alpha]}\},\alpha + \min\{u_v^{[\alpha]},u_w^{[\alpha]}\}]^{[\alpha]}\\
        &= \psi_{\alpha}(v,w)^{[\alpha]}.
    \end{align*}
    
    \noindent\emph{Case 1.2}. $u_w^{[\alpha]} \geq \ell_v^{[\alpha]}$ and $u_v^{[\alpha]} < \ell_w^{[\alpha]}$. Then the intersection equals
    \[
        [0,u_v^{[\alpha]}] \cupdot [\ell_v^{[\alpha]},u_w^{[\alpha]}] \cupdot [\ell_w^{[\alpha]},\alpha)
        = [\ell_v^{[\alpha]},u_w^{[\alpha]}] \cupdot [\ell_w^{[\alpha]},\alpha+u_v^{[\alpha]}]^{[\alpha]}
        = \varphi_{\alpha}(v,w) \cupdot \psi_{\alpha}(v,w)^{[\alpha]}.
    \]
    
    \noindent\emph{Case 1.3}. $u_w^{[\alpha]} < \ell_v^{[\alpha]}$ and $u_v^{[\alpha]} \geq \ell_w^{[\alpha]}$.
    By symmetry we get $v^{[\alpha]} \cap w^{[\alpha]} = \varphi_{\alpha}(w,v) \cupdot \psi_{\alpha}(v,w)^{[\alpha]}$.

    Now, w.l.o.g. assume that $v^{[\alpha]}$ is an interval, i.e. $\ell_v^{[\alpha]} \leq u_v^{[\alpha]}$, while $w^{[\alpha]}$ consists of two intervals, i.e. $u_w^{[\alpha]} < \ell_w^{[\alpha]}$. Then there are three cases as follows.
    
    \noindent\emph{Case 2.1}. $\ell_v^{[\alpha]} \leq u_w^{[\alpha]} < u_v^{[\alpha]} < \ell_w^{[\alpha]}$.
    Then the intersection equals $[\ell_v^{[\alpha]},u_w^{[\alpha]}] = \varphi_{\alpha}(v,w)$.
    
    \noindent\emph{Case 2.2}. $u_w^{[\alpha]} < \ell_v^{[\alpha]} < \ell_w^{[\alpha]} \leq u_v^{[\alpha]}$.
    Then the intersection equals $[\ell_w^{[\alpha]},u_v^{[\alpha]}] = \varphi_{\alpha}(w,v)$.
    
    \noindent\emph{Case 2.3}. $\ell_v^{[\alpha]} \leq u_w^{[\alpha]} < \ell_w^{[\alpha]} \leq u_v^{[\alpha]}$.
    Then the intersection is
    \[[\ell_v^{[\alpha]},u_w^{[\alpha]}] \cupdot [\ell_w^{[\alpha]},u_v^{[\alpha]}] = \varphi_{\alpha}(v,w) \cupdot \varphi_{\alpha}(w,v).\]
    Clearly, if both $v^{[\alpha]}$ and $w^{[\alpha]}$ are intervals (\emph{Case 3}) (which are not disjoint) then their intersection is either $\varphi_{\alpha}(v,w)$ or $\varphi_{\alpha}(w,v)$.
\end{proof}

While the previous lemma characterized the form of intersections of two modular projections of intervals, the next lemma reveals how many intervals will be required to represent a \emph{one-to-many} intersection. We will use this bound in every step of our algorithm.
We want to add that both of these lemmata and even \Cref{feasibility-lemma} do not depend on the harmonic property by themselves. However, they turn out to be especially useful in this setting.

\begin{lemma}\label{k+1-many-intervals}
    Let $\alpha \geq 1$, let $v$ be an interval and let $Q$ be a set of $k\geq 1$ intervals.
    Then there is a set $R$ of at most $k\!+\!1$ intervals s.t.
    \(
        v^{[\alpha]} \cap (\bigcup Q)^{[\alpha]} = (\bigcup R)^{[\alpha]}
    \).
\end{lemma}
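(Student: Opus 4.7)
The plan is to reason geometrically by viewing $[0,\alpha)$ as a circle $\mathcal{C}$ of circumference $\alpha$, so that the modular projection $X^{[\alpha]}$ of any interval $X\subseteq\integers$ becomes a single \emph{arc} of $\mathcal{C}$ (and equals the whole circle if $|X|\geq\alpha$). Under this view, a subset of $\mathcal{C}$ is representable by $m$ intervals in $\integers$ exactly when it has at most $m$ connected components, since every arc can be written as one interval (using a wrap-around representation $[b,\alpha+a]$ when the arc straddles $0$). It therefore suffices to show that the target set
\[A \;:=\; v^{[\alpha]} \cap (\textstyle\bigcup Q)^{[\alpha]}\]
has at most $k+1$ connected components as a subset of $\mathcal{C}$.

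First I would decompose $(\bigcup Q)^{[\alpha]}$ into its $m\leq k$ connected components $C_1,\dots,C_m$; each $C_j$ is itself an arc because unions of overlapping arcs remain arcs. Then $A=\bigcup_{j=1}^m (v^{[\alpha]}\cap C_j)$ and the problem reduces to bounding the total number of components contributed by the individual intersections. The elementary fact I would invoke (which one can either extract directly from \Cref{alpha-intersection-possibilities} or recheck by a short endpoint-interleaving argument) is that the intersection $X\cap Y$ of two arcs of $\mathcal{C}$ has at most two connected components, with exactly two components occurring precisely when $X$ and $Y$ are both proper arcs and $X\cup Y=\mathcal{C}$.

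The key step, and the one I expect to be the main obstacle to state crisply, is to show that at most one of the $C_j$ can produce a two-component intersection with $v^{[\alpha]}$. Suppose otherwise, that both $v^{[\alpha]}\cap C_{j_1}$ and $v^{[\alpha]}\cap C_{j_2}$ split into two arcs for distinct $j_1,j_2$. Then the elementary fact gives $C_{j_1}\cup v^{[\alpha]} = C_{j_2}\cup v^{[\alpha]}=\mathcal{C}$, so both $C_{j_1}$ and $C_{j_2}$ contain the nonempty arc $\mathcal{C}\setminus v^{[\alpha]}$; this contradicts their being distinct connected components of $(\bigcup Q)^{[\alpha]}$. The degenerate case $v^{[\alpha]}=\mathcal{C}$ is handled separately and trivially: there $A=(\bigcup Q)^{[\alpha]}$ has at most $m\leq k$ components.

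Combining these pieces, $A$ has at most $2+(m-1)\cdot 1 = m+1\leq k+1$ connected components. Choosing one interval of $\integers$ for each component (as described above) yields the required set $R$ of cardinality at most $k+1$. The translation between the circular viewpoint and the explicit $\varphi_\alpha,\psi_\alpha$ notation from \Cref{alpha-intersection-possibilities} is routine bookkeeping; the only substantive content lies in the uniqueness argument ruling out two simultaneous two-component intersections.
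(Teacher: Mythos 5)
Your proof is correct, but it takes a genuinely different route from the paper's. The paper never merges the intervals of $Q$ into connected components; instead it classifies each $w\in Q$ by the \emph{type} of its intersection with $v^{[\alpha]}$ (via the $\varphi_\alpha,\psi_\alpha$ cases of \Cref{alpha-intersection-possibilities}) and shows that the union of \emph{all} the two-piece intersections collapses to at most two explicit intervals, because every such piece is anchored at $\ell_v^{[\alpha]}$ or at $u_v^{[\alpha]}$; the count is then $2+(k-|D|)\leq k+1$. You instead pre-merge $(\bigcup Q)^{[\alpha]}$ into $m\leq k$ arcs and argue that at most one arc can intersect $v^{[\alpha]}$ in two components, since any such arc must contain the nonempty complementary arc $\mathcal{C}\setminus v^{[\alpha]}$ and distinct components are disjoint, giving $m+1\leq k+1$. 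Your argument is cleaner and avoids the endpoint bookkeeping, and the disjointness trick is a nice substitute for the paper's observation that two opposite-type intersections already cover all of $v^{[\alpha]}$. What the paper's version buys in exchange is an explicit closed-form representation (in terms of $\ell_v^{[\alpha]}$, $u_v^{[\alpha]}$, $\max_w u_w^{[\alpha]}$, $\min_w \ell_w^{[\alpha]}$) that is used directly to compute each $Q_i$ in linear time inside \Cref{alg:feasible}, whereas your existence argument would still need a merge step to produce $R$ algorithmically. One small imprecision: the intersection of two proper arcs can have a single component even when their union is all of $\mathcal{C}$ (namely when the two complementary arcs are adjacent), so your ``precisely when'' is really only an implication; fortunately you only use the direction ``two components implies $X\cup Y=\mathcal{C}$,'' which is valid.
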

\begin{proof}[Proof of \Cref{k+1-many-intervals}]
    We simply obtain that
    \[
        v^{[\alpha]} \cap \left(\bigcup Q\right)^{[\alpha]}
        = \bigcup_{w\in Q} (v^{[\alpha]} \cap w^{[\alpha]})
        = \bigcup_{\substack{w\in Q\\w^{[\alpha]} \subseteq v^{[\alpha]}}} \!\!\!\!\!\!\! w^{[\alpha]} \,\cup \bigcup_{w\in D} (v^{[\alpha]} \cap w^{[\alpha]})
    \]
    where $D = \set{w\in Q|w^{[\alpha]} \nsubseteq v^{[\alpha]}, w^{[\alpha]} \cap v^{[\alpha]} \neq \emptyset}$ denotes the subset of intervals that cause the interesting intersections with $v^{[\alpha]}$ (cf. \Cref{alpha-intersection-possibilities}).
    Obviously, all other intersections can be represented by at most one interval each. So we study the intersections with $D$. In fact, everything gets simple if there are $w_1,w_2 \in D$ such that $v^{[\alpha]} \cap w_1^{[\alpha]} = \varphi_{\alpha}(v,w_1) \cupdot \psi_{\alpha}(v,w_1)^{[\alpha]}$ and $v^{[\alpha]} \cap w_2^{[\alpha]} = \varphi_{\alpha}(w_2,v) \cupdot \psi_{\alpha}(v,w_2)^{[\alpha]}$. By simply adapting the inequalities of the first case distinction in the proof of \Cref{alpha-intersection-possibilities} we find
    \begin{align*}
        (v^{[\alpha]} &\cap w_1^{[\alpha]}) \cup (v^{[\alpha]} \cap w_2^{[\alpha]})\\
        &= ([0,u_v^{[\alpha]}] \cupdot [\ell_v^{[\alpha]},u_{w_1}^{[\alpha]}] \cupdot [\ell_{w_1}^{[\alpha]},\alpha)) \cup ([0,u_{w_2}^{[\alpha]}] \cupdot [\ell_{w_2}^{[\alpha]},u_v^{[\alpha]}] \cupdot [\ell_v^{[\alpha]},\alpha))\\
        &= [0,u_v^{[\alpha]}] \cupdot [\ell_v^{[\alpha]},\alpha) = v^{[\alpha]}
    \end{align*}
    which implies that $v^{[\alpha]} \cap (\bigcup Q)^{[\alpha]} = v^{[\alpha]}$ can be represented by only one interval, namely $v$. Therefore, in order to get an upper bound we assume that these two types of intersections do not come together. In more detail, we may assume by symmetry that $D = D_1 \cupdot D_2$ where
    \begin{align*}
        & D_1 = \set{w\in D|v^{[\alpha]} \cap w^{[\alpha]} = \varphi_{\alpha}(v,w) \cupdot \varphi_{\alpha}(w,v)} \text{ and}\\
        & D_2 = \set{w\in D|v^{[\alpha]} \cap w^{[\alpha]} = \varphi_{\alpha}(v,w) \cupdot \psi_{\alpha}(v,w)^{[\alpha]}}.
    \end{align*}
    It turns out that
    \begin{align*}
        \bigcup_{w\in D_1} \! (v^{[\alpha]} \cap w^{[\alpha]})
        &= \!\!\bigcup_{w\in D_1} \! ([\ell_v^{[\alpha]},u_w^{[\alpha]}] \cupdot [\ell_w^{[\alpha]},u_v^{[\alpha]}])\\
        &= [\ell_v^{[\alpha]},\max_{w\in D_1}u_w^{[\alpha]}] \cup [\min_{w\in D_1}\ell_w^{[\alpha]},u_v^{[\alpha]}]\quad\text{and}\\
        \bigcup_{w\in D_2} \! (v^{[\alpha]} \cap w^{[\alpha]})
        &= \bigcup_{w\in D_2} ([\ell_v^{[\alpha]},u_w^{[\alpha]}] \cupdot [\ell_w^{[\alpha]},\alpha+u_v^{[\alpha]}]^{[\alpha]})\\
        &= [\ell_v^{[\alpha]},\max_{w\in D_2} u_w^{[\alpha]}] \cup [\min_{w\in D_2}\ell_w^{[\alpha]},\alpha+u_v^{[\alpha]}]^{[\alpha]}
    \end{align*}
    which finally joins up to
    \[\bigcup_{w\in D} \! (v^{[\alpha]} \cap w^{[\alpha]}) = [\ell_v^{[\alpha]},\max_{w\in D} u_w^{[\alpha]}] \cup [\min_{w\in D}\ell_w^{[\alpha]},\alpha+u_v^{[\alpha]}]^{[\alpha]}.\]
    Hence, all intersections with intervals in $D$ may be represented by at most two intervals in total while each other intersection can be represented by at most one interval. Thus, if $\abs{D}=0$ then the whole intersection can be represented by at most $k$ intervals. If $\abs{D}\geq 1$ then there are at most $2+\abs{Q}-\abs{D} \leq 2+k-1 = k+1$ intervals required.
\end{proof}
\begin{figure}
    \centering
    \begin{tikzpicture}
    	\def\x{.65}\def\y{.7}\draw (0,3.5*\y) node [left] {$Q_{i+1}$:} -- (20*\x,3.5*\y);
\draw (0,3.5*\y-.2) node [below] {$0$} -- (0,3.5*\y+.2);
\draw (4*\x,3.5*\y-.2) node [below] {$a_i$} -- (4*\x,3.5*\y+.2);
\draw (8*\x,3.5*\y-.2) -- (8*\x,3.5*\y+.2);
\draw (12*\x,3.5*\y-.2) -- (12*\x,3.5*\y+.2);
\draw (16*\x,3.5*\y-.2) -- (16*\x,3.5*\y+.2);
\draw (20*\x,3.5*\y-.2) node [below] {$a_{i+1}$} -- (20*\x,3.5*\y+.2);
\fill [green!50!white,draw=green!70!black] (4.5*\x,3.5*\y-.15) rectangle (6*\x,3.5*\y+.15);
\fill [green!50!white,draw=green!70!black] (14.5*\x,3.5*\y-.1) rectangle (17*\x,3.5*\y+.1);

\draw (0,2*\y)  -- (4*\x,2*\y);
\draw (0,2*\y-.2) node [below] {$0$} -- (0,2*\y+.2);
\draw (4*\x,2*\y-.2) node [below] {$a_i$} -- (4*\x,2*\y+.2);
\fill [green!50!white,draw=green!70!black] (0.5*\x,2*\y-.15) rectangle (2*\x,2*\y+.15);
\fill [green!50!white,draw=green!70!black] (2.5*\x,2*\y-.1) rectangle (4*\x,2*\y+.1);
\fill [green!50!white,draw=green!70!black] (0,2*\y-.1) rectangle (\x,2*\y+.1);

\fill [lightgray,draw=darkgray] (0.7*\x,1.3*\y-.1) rectangle node [below,black] {$R_i^{[a_i]}$} (3.3*\x,1.3*\y+.1);

\draw (0,0) node [left] {$Q_i$:} -- (4*\x,0);
\draw (0,-0.2) node [below] {$0$} -- (0,0.2);
\draw (4*\x,-0.2) node [below] {$a_i$} -- (4*\x,0.2);
\fill [green!50!white,draw=green!70!black] (0.7*\x,-0.1) rectangle (2*\x,0.1);
\fill [green!50!white,draw=green!70!black] (2.5*\x,-0.1) rectangle (3.3*\x,0.1);
    \end{tikzpicture}
    \caption{A step from $i\!+\!1$ to $i$; modular projection to $[0,a_i)$ and intersection with $R_i^{[a_i]}$}
    \label{fig:feasibility-test-step}
\end{figure}
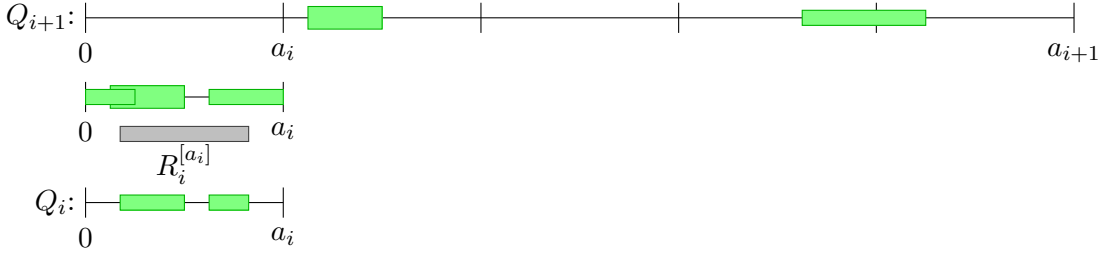
\begin{algorithm}
\begin{algorithmic}
\Procedure{Feasible}{$I=(a_1,\dots,a_n,R_1,\dots,R_n)$}
    \State $Q_n \gets \{R_n\}$
    \For{$i=n-1,\dots,1$}
        \State Compute set $Q_i$ s. t. $(\bigcup Q_i)^{[a_i]} = R_i^{[a_i]}\cap(\bigcup Q_{i+1})^{[a_i]}$ and $|Q_i| \leq \oh(n-i)$
    \EndFor
    \If{$\bigcup Q_1 = \emptyset$}
        \State
        \Return{\enquote{infeasible}}
    \Else
        \State
        \Return{\enquote{feasible}}
    \EndIf
\EndProcedure
\end{algorithmic}
\caption{Feasibility test for FSC}
\label{alg:feasible}
\end{algorithm}
Let $S_i$ denote the set of all solutions $s \in \integers_{\geq 0}$ that are feasible for each of the constraints $i,i+1,\dots,n$. We set $S_{n+1} = \mathbb{Z}_{\geq 0}$ to denote the feasible solutions to an empty set of constraints. The correctness of \Cref{alg:feasible} is implied by the following fundamental lemma. See \Cref{fig:feasibility-test-step} for an example of a step inside the algorithm.
\begin{lemma}\label{feasibility-lemma}
    It holds true that $S_i^{[a_i]} = R_i^{[a_i]} \cap S_{i+1}^{[a_i]}$ for all $i=1,\dots,n$.
\end{lemma}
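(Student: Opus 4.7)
The plan is to reduce the claim to \Cref{s-feasible-for-one-constraint} applied constraint-by-constraint, thereby translating set-theoretic containment of modular projections into concrete statements about individual witnesses $s \in \integers_{\geq 0}$. The key observation is that membership in $S_i$ is simply the conjunction of the per-constraint membership conditions, and each of these per-constraint conditions depends only on the value $s^{[a_i]}$ (respectively $s^{[a_j]}$), not on $s$ itself. Notice that the harmonic property plays no role here; it only enters when bounding the size of the sets $Q_i$ in the algorithm.

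First I would show the inclusion $S_i^{[a_i]} \subseteq R_i^{[a_i]} \cap S_{i+1}^{[a_i]}$. Pick any $y \in S_i^{[a_i]}$, so $y = s^{[a_i]}$ for some $s \in S_i$. Because $s$ satisfies constraint $i$, \Cref{s-feasible-for-one-constraint} gives $y = s^{[a_i]} \in R_i^{[a_i]}$. Moreover, $s$ also satisfies constraints $i+1,\dots,n$, so $s \in S_{i+1}$ and hence $y = s^{[a_i]} \in S_{i+1}^{[a_i]}$. Combining both containments proves the inclusion.

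For the reverse inclusion, take $y \in R_i^{[a_i]} \cap S_{i+1}^{[a_i]}$. From $y \in S_{i+1}^{[a_i]}$ there is some $s \in S_{i+1}$ with $s^{[a_i]} = y$. Since $y \in R_i^{[a_i]}$, we have $s^{[a_i]} \in R_i^{[a_i]}$, so by \Cref{s-feasible-for-one-constraint} the element $s$ already satisfies constraint $i$ as well. Together with $s \in S_{i+1}$ this means $s$ satisfies all of constraints $i,i+1,\dots,n$, i.e. $s \in S_i$, and therefore $y = s^{[a_i]} \in S_i^{[a_i]}$.

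I do not expect any real obstacles: the only point to be a little careful about is that $S_{i+1}^{[a_i]}$ is defined via the actual integers $s \in S_{i+1}$ (not via $S_{i+1}^{[a_{i+1}]}$), which is exactly what makes the witness $s$ produced in the second inclusion directly usable for constraint $i$ via \Cref{s-feasible-for-one-constraint}. Both inclusions together yield $S_i^{[a_i]} = R_i^{[a_i]} \cap S_{i+1}^{[a_i]}$, as required.
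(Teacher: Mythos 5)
Your proof is correct and follows essentially the same two-inclusion argument as the paper: the forward direction uses $S_i \subseteq S_{i+1}$ and the fact that any $s \in S_i$ satisfies constraint $i$, and the reverse direction lifts a witness $s \in S_{i+1}$ with $s^{[a_i]} \in R_i^{[a_i]}$ into $S_i$ via \Cref{s-feasible-for-one-constraint}. The only difference is that you cite \Cref{s-feasible-for-one-constraint} explicitly where the paper leaves it implicit, which is a minor improvement in clarity rather than a different route.
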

\begin{proof}
    Let $r\in S_i^{[a_i]}$. So there is a solution $s \in S_i$ such that $r=s^{[a_i]} \in R_i^{[a_i]}$. It holds that $S_i \subseteq S_{i+1}$ which implies $s \in S_{i+1}$ and thus $r=s^{[a_i]} \in S_{i+1}^{[a_i]}$.
    
    Vice-versa let $r \in R_i^{[a_i]} \cap S_{i+1}^{[a_i]}$. So there is a solution $s\in S_{i+1}$ with $s^{[a_i]}=r$. From $r\in R_i^{[a_i]}$ we get $s^{[a_i]}\in R_i^{[a_i]}$. Hence, $s\in S_i$ and $r=s^{[a_i]} \in S_i^{[a_i]}$.
\end{proof}

\begin{theorem}
    \Cref{alg:feasible} decides the feasibility of an instance in time $\oh(n^2)$.
\end{theorem}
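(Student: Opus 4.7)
The plan is to establish (i) correctness of \Cref{alg:feasible} by an inductive invariant, and (ii) the claimed $\oh(n^2)$ runtime via a linear bound on $|Q_i|$.

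For correctness, I would prove by downward induction on $i$ that the loop maintains the invariant $(\bigcup Q_i)^{[a_i]} = S_i^{[a_i]}$. The base case $i=n$ is immediate: $Q_n = \{R_n\}$, and \Cref{s-feasible-for-one-constraint} gives $S_n^{[a_n]} = R_n^{[a_n]}$. For the inductive step I would first exploit the harmonic property $a_i \mid a_{i+1}$: from the hypothesis $(\bigcup Q_{i+1})^{[a_{i+1}]} = S_{i+1}^{[a_{i+1}]}$ it follows that $(\bigcup Q_{i+1})^{[a_i]} = S_{i+1}^{[a_i]}$, since equality of residues modulo a multiple of $a_i$ forces equality of residues modulo $a_i$. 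Combining this with \Cref{feasibility-lemma} and the defining property of $Q_i$ computed in the loop yields $S_i^{[a_i]} = R_i^{[a_i]} \cap S_{i+1}^{[a_i]} = R_i^{[a_i]} \cap (\bigcup Q_{i+1})^{[a_i]} = (\bigcup Q_i)^{[a_i]}$. After the loop terminates, the instance is feasible iff $S_1 \neq \emptyset$ iff $S_1^{[a_1]} \neq \emptyset$ iff $\bigcup Q_1 \neq \emptyset$, which is precisely the test the algorithm performs.

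For the runtime, I would prove by induction that $|Q_i| \leq n - i + 1$. The base is $|Q_n| = 1$. For the step I would invoke \Cref{k+1-many-intervals} with $v := R_i$ (a single interval $[b_i,B_i] \subseteq \integers$) and $Q := Q_{i+1}$: the lemma produces a set $Q_i$ of at most $|Q_{i+1}| + 1 \leq n - i + 1$ intervals satisfying $(\bigcup Q_i)^{[a_i]} = R_i^{[a_i]} \cap (\bigcup Q_{i+1})^{[a_i]}$, exactly what the algorithm requires. The proof of \Cref{k+1-many-intervals} is constructive: for each $w \in Q_{i+1}$ compute $\ell_w^{[a_i]}$ and $u_w^{[a_i]}$, classify $w$ as contained in $v^{[a_i]}$, disjoint from $v^{[a_i]}$, or a boundary element (in $D$), and assemble the at most two new boundary intervals from the extreme projected endpoints of $D$. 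This runs in $\oh(|Q_{i+1}|)$ time per iteration, so the loop costs $\sum_{i=1}^{n-1} \oh(n - i) = \oh(n^2)$, plus $\oh(n)$ for initialisation and the final emptiness check on $\bigcup Q_1$.

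The key subtlety is to feed \Cref{k+1-many-intervals} with the \emph{original} integer interval $v = R_i$, rather than with its modular projection $R_i^{[a_i]}$ which may consist of two intervals. A naive split would inflate $|Q_i|$ to $\leq 2|Q_{i+1}| + 2$ and yield an exponential bound; invoking the lemma on the single interval $R_i$ lets its internal machinery absorb any wrap-around automatically and secures the additive $+1$ per step. The other minor care point is transporting the inductive invariant from modulus $a_{i+1}$ to modulus $a_i$, which is exactly where the harmonic assumption is used.
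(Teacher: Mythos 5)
Your proof is correct and follows essentially the same route as the paper: a downward induction establishing the invariant $(\bigcup Q_i)^{[a_i]} = S_i^{[a_i]}$ via the harmonic identity $((\bigcup Q_{i+1})^{[a_{i+1}]})^{[a_i]} = (\bigcup Q_{i+1})^{[a_i]}$ combined with \Cref{feasibility-lemma}, and the bound $|Q_i|\leq |Q_{i+1}|+1$ from \Cref{k+1-many-intervals} for the $\oh(n^2)$ runtime. Your explicit remark that the lemma must be applied to the single original interval $R_i$ rather than to its (possibly two-piece) projection $R_i^{[a_i]}$ is a useful clarification of a point the paper leaves implicit, but it does not change the argument.
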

\begin{proof}
    We show that $\bigcup Q_i \equiv S_i \pmod {a_i}$ for all $i=n,\dots,1$. This will prove the algorithm correct since then $\bigcup Q_1 \equiv S_1 \pmod {a_1}$ and that means $\bigcup Q_1$ is empty if and only if $S_1$ is empty.
    Obviously it holds that $\bigcup Q_n \equiv S_n \pmod {a_n}$ since $\bigcup Q_n = R_n$.
    Now suppose that $\bigcup Q_{i+1} \equiv S_{i+1} \pmod {a_{i+1}}$ for some $i\geq 1$. We have that
    \[\left(\bigcup Q_i\right)^{[a_i]} = R_i^{[a_i]}\cap\left(\bigcup Q_{i+1}\right)^{[a_i]}\]
    where the harmonic property implies \[\left(\bigcup Q_{i+1}\right)^{[a_i]} = \left(\left(\bigcup Q_{i+1}\right)^{[a_{i+1}]}\right)^{[a_i]} = \left(S_{i+1}^{[a_{i+1}]}\right)^{[a_i]} = S_{i+1}^{[a_i]}.\] Together with \Cref{feasibility-lemma} this yields
    \[
        \left(\bigcup Q_i\right)^{[a_i]} = R_i^{[a_i]}\cap S_{i+1}^{[a_i]} = S_i^{[a_i]}
    \]
    and that proves the algorithm correct.
    Using \Cref{alpha-intersection-possibilities,k+1-many-intervals,feasibility-lemma} each set $Q_i$ can be computed in time $\oh(n)$ and this yields a total running time of $\oh(n^2)$.
\end{proof}

\subsection{Computing optimal solutions}
\label{harmonic-optimization}

Unfortunately, \Cref{alg:feasible} neither calculates a solution nor directly implies one. Here we show how to compute the smallest feasible solution $s_{\min}$ to FSC. However, by searching in the opposite direction the same technique also applies to the computation of the largest feasible solution $s_{\max} < a_n$. We start with a simple binary search approach.

\begin{corollary}\label{minimize-s-with-binary-search}
    For feasible instances $s_{\min}$ can be computed in time $\oh(n^2\log(a_n))$.
\end{corollary}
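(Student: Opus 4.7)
The plan is to combine the $\oh(n^2)$ feasibility test of \Cref{alg:feasible} with a binary search on the value of $s$. Two facts make this clean. First, by \Cref{s-min-at-most-lcm} and the harmonic property we have $s_{\min} < \lcm(a_1,\dots,a_n) = a_n$, so the search range is $\{0,1,\dots,a_n-1\}$, of size $a_n$. Second, for harmonic divisors the set $S_1$ of all feasible solutions is closed under subtracting multiples of $a_n$: if $s \in S_1$ then $s' = s \bmod a_n$ satisfies $s'^{[a_i]} = s^{[a_i]} \in R_i^{[a_i]}$ for every $i$ because $a_i \mid a_n$, so $s' \in S_1$ as well.

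Given a threshold $t \in \{0,\dots,a_n-1\}$, I would implement the query ``is there a feasible $s$ with $s \leq t$?'' by augmenting the instance with one extra constraint of divisor $a_{n+1} := a_n$ and remainder interval $R_{n+1} := [0,t]$. Because $a_n \mid a_{n+1}$, the augmented instance is still harmonic, so \Cref{alg:feasible} decides its feasibility in $\oh((n+1)^2) = \oh(n^2)$ time. The augmented instance is feasible iff some $s \in S_1$ satisfies $s \bmod a_n \in [0,t]$, and by the closure property above this is equivalent to $S_1 \cap [0,t] \neq \emptyset$, i.e.\ to $s_{\min} \leq t$.

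With this monotone yes/no oracle, binary search on $t \in \{0,\dots,a_n-1\}$ locates $s_{\min}$ in $\oh(\log a_n)$ calls. Each call costs $\oh(n^2)$, yielding the claimed $\oh(n^2 \log(a_n))$ bound. The only subtlety — and the point one has to be careful about — is the justification that the added constraint really captures ``$s \leq t$''; this is exactly where the harmonic hypothesis enters via the mod-$a_n$ closure of $S_1$ together with $s_{\min} < a_n$. Everything else is bookkeeping: the extra constraint at position $n+1$ merely initialises the first iteration of \Cref{alg:feasible} with $Q_{n+1} = \{[0,t]\}$, and all later steps are untouched.
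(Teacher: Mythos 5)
Your proposal is correct and follows essentially the same route as the paper: append one auxiliary constraint that caps $s$, then binary search on its upper bound using \Cref{alg:feasible} as the oracle. The only (harmless) difference is in the gadget and its justification — the paper uses divisor $a_{n+1}=2a_n$ and an exchange argument showing $x_{n+1}=0$ w.l.o.g., whereas you use $a_{n+1}=a_n$ and the mod-$a_n$ closure of the solution set; both are valid and yield the same $\oh(n^2\log a_n)$ bound.
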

This can be achieved by introducing an additional constraint measuring the value of $s$ as follows.
Let $\beta$ be a positive integer. We extend the problem instance by a new constraint with number $n+1$ defined by
$a_{n+1} = 2\cdot a_n$, $b_{n+1} = 0$, and $B_{n+1} = \beta$.
Remark that this \emph{$\beta$-instance} admits the same set of solutions as the original instance as long as $\beta$ is large enough, e.g. $\beta = a_n$ (cf. \Cref{s-min-at-most-lcm}).
Consider a feasible solution to the $\beta$-instance where $\beta \leq a_n$. It holds that
\[2a_nx_{n+1} = a_{n+1}x_{n+1} \leq s + a_{n+1}x_{n+1} \leq B_{n+1} = \beta \leq a_n\]
which implies $x_{n+1} \leq \floor{\tfrac12} = 0$.
However, if $x_{n+1} < 0$ then $s \geq a_{n+1}\cdot \abs{x_{n+1}}$ and therefore the solution $s' = s + a_{n+1}x_{n+1}$ with $x_{n+1}' = 0$ and $x_i' = x_i - (a_{n+1}/a_i)x_{n+1}$ for all $i=1,\dots,n$ is better than $s$ and $x_{n+1}' = 0$.
    
Thus we may assume generally that $x_{n+1} = 0$ which allows us to measure the value of $s$ using the upper bound $\beta$.
We use $\beta$ to do a binary search in the interval $[0,a_n]$ using \Cref{alg:feasible} to check the $\beta$-instance for feasibility. The smallest possible value for $\beta$ then states the optimum value and that proves \Cref{minimize-s-with-binary-search}.
However, with additional ideas we are able to achieve strongly polynomial time.
The next lemma seems to be a characteristic property of modular arithmetics on sets.

\begin{lemma}\label{modular-intersections}
    For all numbers $a,b \in \integers_{\geq 1}$ and sets $A,B \subseteq \integers$ it holds
    \[A^{[a]}\cap B^{[a]} = \left(A^{[ab]} \cap \bigcup_{i=0}^{b-1}(ia+B^{[a]})\right)^{[a]}.\]
\end{lemma}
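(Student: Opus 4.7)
The plan is to prove the set equality by a direct double-inclusion argument, after first simplifying the inner union on the right-hand side.

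First I would identify the set $C := \bigcup_{i=0}^{b-1}(ia + B^{[a]})$ as the canonical ``lift'' of the residue set $B^{[a]}$ to the window $[0,ab)$. Since $B^{[a]} \subseteq [0,a)$, each shifted copy $ia + B^{[a]}$ lives in the disjoint block $[ia,(i+1)a)$, so $C \subseteq [0,ab)$ and every element of $C$ has the form $ia + t$ with $t \in B^{[a]}$. In particular $C^{[a]} = B^{[a]}$. The other basic fact I would use is that $a \mid ab$, so $(A^{[ab]})^{[a]} = A^{[a]}$, i.e.\ projecting mod $ab$ first and then mod $a$ is the same as projecting mod $a$ directly.

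For the inclusion $\subseteq$, take any $s \in A^{[a]} \cap B^{[a]}$ and pick $\alpha \in A$ with $\alpha^{[a]} = s$. Let $r := \alpha^{[ab]} \in A^{[ab]}$. Because $a \mid ab$, we still have $r^{[a]} = s \in B^{[a]}$. Since $r \in [0,ab)$ and $s \in [0,a)$, there is a unique $i \in \{0,\dots,b-1\}$ with $r = ia + s$, which places $r$ in $ia + B^{[a]} \subseteq C$. Hence $r \in A^{[ab]} \cap C$ and $s = r^{[a]}$ lies in the right-hand side.

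For the reverse inclusion $\supseteq$, take $s$ in the right-hand side, so $s = r^{[a]}$ for some $r \in A^{[ab]} \cap C$. Any witness $\alpha \in A$ with $\alpha \equiv r \pmod{ab}$ also satisfies $\alpha \equiv r \pmod a$, giving $s = \alpha^{[a]} \in A^{[a]}$. On the other hand, writing $r = ia + t$ with $t \in B^{[a]}$ gives $s = r^{[a]} = t \in B^{[a]}$. So $s \in A^{[a]} \cap B^{[a]}$.

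There is no real obstacle once one recognises $C$ as the lift of $B^{[a]}$ to $[0,ab)$; the main thing to keep clean is the bookkeeping of the two moduli, in particular the fact $(A^{[ab]})^{[a]} = A^{[a]}$ that lets us pass between $A$ and $A^{[ab]}$ without losing information modulo $a$.
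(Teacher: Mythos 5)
Your proof is correct and follows essentially the same route as the paper: both arguments rest on recognizing $\bigcup_{i=0}^{b-1}(ia+B^{[a]})$ as exactly the elements of $[0,ab)$ whose residue mod $a$ lies in $B^{[a]}$, together with the identity $(A^{[ab]})^{[a]} = A^{[a]}$. The paper phrases this as one chain of equivalences while you unfold it into two inclusions, but the content is the same.
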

\begin{proof}
Let $x$ be a number. Then it holds
\begin{align*}
    x \in \left(A^{[ab]} \cap \bigcup_{i=0}^{b-1}(ia+B^{[a]})\right)^{[a]}
    &\Leftrightarrow\;\; \exists y \in A^{[ab]}: y \in \bigcup_{i=0}^{b-1}(ia+B^{[a]}) \land x = y^{[a]}\\
    &\Leftrightarrow\;\; \exists y \in A^{[ab]} : y^{[a]} \in B^{[a]} \land x = y^{[a]}\\
    &\Leftrightarrow\;\; x \in A^{[a]} \cap B^{[a]}
\end{align*}
where the last equivalence follows from $(A^{[ab]})^{[a]} = A^{[a]}$.
\end{proof}

Since the right side can be written as the modular projection of a \emph{union of intersections} modulo $a$ we can find a sensible strengthening; in fact, for arbitrary sets $X,M_0,\dots,M_{m-1}$ it holds that
\[\bigcup_{i=0}^{m-1}(X \cap M_i) = \bigcup_{i=0}^{m-1} (X \cap (M_i\setminus \bigcup_{j=0}^{i-1} (X \cap M_j))).\]
While the left-hand side may not, the right-hand side is always a \emph{disjoint} union. Taking into account the modular projections this leads to the following corollary.
\begin{corollary}\label{modular-intersections-disjoint}
    For all numbers $a,b \in \integers_{\geq 1}$ and sets $A,B \subseteq \integers$ it holds
    \[A^{[a]}\cap B^{[a]} = \left(\bigcup_{i=0}^{b-1} D_i\right)^{[a]}\]
    where $D_i = A^{[ab]} \cap Y_i$ and
    $Y_i = i a + (B^{[a]} \setminus \bigcup_{j=0}^{i-1} D_j^{[a]})$ for all $i=0,\dots,b-1$.
\end{corollary}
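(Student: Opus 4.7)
The plan is to reduce the corollary to \Cref{modular-intersections} together with the set-theoretic disjointification identity displayed just before the corollary. I would introduce the shorthand $E_i := A^{[ab]} \cap (ia + B^{[a]})$ so that \Cref{modular-intersections} reads
\[A^{[a]} \cap B^{[a]} = \left(\bigcup_{i=0}^{b-1} E_i\right)^{[a]}.\]
What remains is therefore to prove $\bigl(\bigcup_{i} D_i\bigr)^{[a]} = \bigl(\bigcup_{i} E_i\bigr)^{[a]}$.

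The inclusion $\bigl(\bigcup_i D_i\bigr)^{[a]} \subseteq \bigl(\bigcup_i E_i\bigr)^{[a]}$ is immediate: by construction $Y_i \subseteq ia + B^{[a]}$, hence $D_i = A^{[ab]} \cap Y_i \subseteq A^{[ab]} \cap (ia + B^{[a]}) = E_i$ for every $i$.

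For the reverse inclusion I would take an arbitrary $x \in E_k^{[a]}$ and unpack it: there is $y \in A^{[ab]} \cap (ka + B^{[a]})$ with $y^{[a]} = x$, and writing $y = ka + z$ with $z \in B^{[a]}$ forces $x = z \in B^{[a]}$. I would then split on whether $x$ already lies in $\bigcup_{j<k} D_j^{[a]}$. If it does, we are done. If it does not, then $x \in B^{[a]} \setminus \bigcup_{j<k} D_j^{[a]}$ yields $ka + x \in Y_k$, and combining with $y = ka + x \in A^{[ab]}$ places $y$ inside $D_k$, so $x = y^{[a]} \in D_k^{[a]}$.

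The main subtlety — and the only reason the corollary is not a one-line instance of the cited disjointification identity — is that in the present setting the sets $M_i := ia + B^{[a]}$ lie in pairwise disjoint blocks of length $a$ inside $[0,ab)$. Hence $X \cap M_j$ is already disjoint from $M_i$ whenever $i \neq j$, and the textbook disjointification $M_i \setminus \bigcup_{j<i} (X \cap M_j)$ removes nothing. The disjointification must instead be performed \emph{after} reducing modulo $a$, which is precisely what the recursive definition $Y_i = ia + (B^{[a]} \setminus \bigcup_{j<i} D_j^{[a]})$ encodes; the case split above is exactly the argument that bridges these two viewpoints.
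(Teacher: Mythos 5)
Your proof is correct and follows the route the paper intends: the corollary is presented as a consequence of \Cref{modular-intersections} combined with the disjointification identity displayed just before it, and your element-chase supplies exactly the verification that the paper leaves implicit. Your closing remark is also accurate — since the translates $ia+B^{[a]}$ live in disjoint blocks of $[0,ab)$, the displayed identity applied verbatim removes nothing, so the disjointification must indeed be carried out after projecting modulo $a$, which is what your case split (equivalently, the identity $D_i^{[a]} = E_i^{[a]}\setminus\bigcup_{j<i}D_j^{[a]}$) establishes.
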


We will use \Cref{modular-intersections-disjoint} to aggregate constraints in order to reduce the problem size. The following observation gives a first bound for the smallest feasible solution.

\begin{observation}
	For feasible instances it holds that $s_{\min} \in R_n^{[a_n]}$.
\end{observation}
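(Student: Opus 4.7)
The plan is to combine the two previously established observations about $s_{\min}$. First, I would invoke the harmonic property: since $a_i \mid a_{i+1}$ for all $i < n$, we get $\lcm(a_1,\dots,a_n) = a_n$. Applying \Cref{s-min-at-most-lcm} then yields $s_{\min} < a_n$, and therefore $s_{\min}^{[a_n]} = s_{\min}$ (the modular projection acts as the identity on $[0,a_n)$).

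Second, since $s_{\min}$ is by definition a feasible solution, it in particular satisfies constraint $n$. By \Cref{s-feasible-for-one-constraint} this means $s_{\min}^{[a_n]} \in R_n^{[a_n]}$. Combining the two facts gives $s_{\min} = s_{\min}^{[a_n]} \in R_n^{[a_n]}$, which is the claim.

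There is essentially no obstacle here; the statement is a direct corollary of the harmonic structure (which reduces the global $\lcm$ to $a_n$) together with the trivial necessary condition coming from the single last constraint. The only thing one has to notice is that the harmonic chain $a_1 \mid a_2 \mid \dots \mid a_n$ collapses the bound of \Cref{s-min-at-most-lcm} from $\lcm$ down to $a_n$, which is what makes $s_{\min}$ literally lie inside the representative window $[0,a_n)$ rather than merely being congruent to an element of $R_n^{[a_n]}$.
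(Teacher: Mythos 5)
Your argument is correct and is essentially identical to the paper's own proof: both use the harmonic property to collapse $\lcm(a_1,\dots,a_n)$ to $a_n$, invoke \Cref{s-min-at-most-lcm} to get $s_{\min} < a_n$, and then apply \Cref{s-feasible-for-one-constraint} to constraint $n$ to conclude $s_{\min} = s_{\min}^{[a_n]} \in R_n^{[a_n]}$.
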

This is true since in the harmonic case $s_{\min} < \lcm(a_1,\dots,a_n) = a_n$ due to \Cref{s-min-at-most-lcm} which then implies that $s_{\min} = s_{\min}^{[a_n]} \in R_n^{[a_n]}$ using \Cref{s-feasible-for-one-constraint}.


\noindent The idea is to search for $s_{\min}$ in the modular projection $R_n^{[a_n]}$ by aggregating the penultimate constraint $n-1$ into the last constraint $n$. Fortunately, the number of intervals needed to represent both constraints can be bounded by a constant. A fine-grained construction then enforces the algorithm to efficiently iterate the feasibility test on aggregated instances to find the optimum value.

\begin{theorem}
    For feasible instances $s_{\min}$ can be computed in time $\oh(n^3)$.
\end{theorem}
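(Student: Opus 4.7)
The plan is to compute $s_{\min}$ by first running \Cref{alg:feasible} to secure compact representations of each $S_i$ modulo $a_i$, and then iteratively refining a candidate set inside $[0,a_n)$ down to a set whose minimum is $s_{\min}$. By \Cref{s-min-at-most-lcm} together with the observation preceding the theorem, $s_{\min} < a_n$ and $s_{\min} \in R_n^{[a_n]}$, which consists of at most two intervals in $[0,a_n)$. Setting $T_i := S_i \cap [0,a_n)$, we have $T_n = R_n^{[a_n]}$ and $s_{\min} = \min T_1$, and the natural recurrence reads
\[
T_i = T_{i+1} \cap \bigl\{\, s \in [0,a_n) : s^{[a_i]} \in R_i^{[a_i]} \,\bigr\}.
\]

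The second factor above is the $a_i$-periodic lift of $R_i^{[a_i]}$ inside $[0,a_n)$ and contains $a_n/a_i$ shifted copies of a two-interval pattern, which is far too many to enumerate explicitly. To avoid this blow-up I would apply \Cref{modular-intersections-disjoint} with $a := a_i$, $b := a_n/a_i$, $A := T_{i+1}$ (already contained in $[0,a_n)$, so $A^{[ab]} = A$), and $B := R_i^{[a_i]}$, obtaining a disjoint decomposition $T_i = \bigcup_k D_k$ that only needs to visit those shifts $k$ for which $T_{i+1} \cap [k a_i,(k+1)a_i) \neq \emptyset$. Inside each such period the local structure of $D_k$ is guided by the precomputed $Q_i$ from the initial feasibility run, which has size $\oh(n-i)$ and represents exactly $S_i$ modulo $a_i$.

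The main obstacle is the combinatorial accounting at each refinement step. A priori, $T_i$ could contain up to $\Theta((a_n/a_i) \cdot |Q_i|)$ intervals, which would immediately break the $\oh(n^3)$ target. The crucial insight I plan to exploit is that the disjoint sets $Y_k$ in \Cref{modular-intersections-disjoint} progressively consume $B^{[a_i]}$, so that only boundedly many shifts $k$ actually contribute a nonempty $D_k$; combined with the fact that $|B^{[a_i]}| \leq 2$, this controls the total new interval content produced per refinement by $\oh(|T_{i+1}| + n)$. This matches the paper's description of aggregating the penultimate constraint $n-1$ into the last constraint $n$ using only a constant number of intervals, and then iterating the feasibility test on each successively aggregated instance.

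Granting this control, each of the $n$ refinement rounds reduces to a feasibility-style sweep of cost $\oh(n^2)$, so together with the initial $\oh(n^2)$ feasibility computation the total running time is $\oh(n^3)$. Once $T_1$ is available, $s_{\min}$ is simply the leftmost endpoint of the leftmost interval in $T_1$, extracted in $\oh(n)$ additional time.
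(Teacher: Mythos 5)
There is a genuine gap at the combinatorial heart of your argument. First, a correctness issue: \Cref{modular-intersections-disjoint} does \emph{not} give $T_i=\bigcup_k D_k$. Because each $Y_k$ subtracts the residues already covered by earlier windows, $\bigcup_k D_k$ retains only the \emph{smallest} representative in $T_{i+1}$ of each residue class modulo $a_i$, so it is in general a proper subset of $T_{i+1}\cap\{s: s^{[a_i]}\in R_i^{[a_i]}\}$. One can still recover $s_{\min}=\min \tilde T_1$ from these reduced sets, but that requires an explicit "smallest representative" argument (this is exactly what property~(b) of \Cref{at-most-four-smallest-intervals} encodes), which you do not supply. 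Second, and more seriously, your claimed interval control --- that only boundedly many shifts $k$ contribute a nonempty $D_k$ and that each refinement adds only $\oh(|T_{i+1}|+n)$ intervals --- is asserted, not proved, and it does not follow from the paper's lemma. The counting argument in \Cref{at-most-four-smallest-intervals} ("at most three nonempty $D_i$") hinges on $A=R_n$ being a \emph{single} interval, so that among any four windows meeting $A^{[a_n]}$ two are fully covered and the budget $B^{[a_{n-1}]}$ is exhausted. When $A=T_{i+1}$ is a union of $m$ intervals, up to $\Theta(m)$ windows can meet $A$ only partially, and the set differences $B^{[a_i]}\setminus\bigcup_{j<k}D_j^{[a_i]}$ can fragment into $\Theta(m)$ pieces per window; the bound $|R_i^{[a_i]}|\le 2$ limits the total \emph{measure} being distributed, not the number of pieces it is cut into. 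Nothing in your argument rules out the interval count growing multiplicatively over the $n$ refinement rounds, which would destroy the $\oh(n^3)$ bound.

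The paper avoids maintaining any set like $T_i$ precisely for this reason. Its algorithm only ever applies \Cref{modular-intersections-disjoint} to the \emph{last two} constraints, where one side is the single interval $R_n$, so \Cref{at-most-four-smallest-intervals} yields a constant number $k\le C$ of disjoint candidate intervals $E_1<\dots<E_k\subseteq R_n^{[a_n]}$, each consisting of the smallest representatives of its residue classes mod $a_{n-1}$. It then runs the $\oh(n^2)$ feasibility test (\Cref{alg:feasible}) on each of the $k$ aggregated instances $\mathcal{I}_j$ (constraints $1,\dots,n-2$ plus $s^{[a_n]}\in E_j$), identifies the smallest $j$ with $\mathcal{I}_j$ feasible --- which must contain $s_{\min}$ by disjointness, by $s_{\min}<a_n$ (\Cref{s-min-at-most-lcm}), and by the minimality property~(b) --- and recurses on $\mathcal{I}_j$, which has one fewer constraint. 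This gives $n$ levels of $C$ feasibility tests, i.e.\ $n\cdot C\cdot\oh(n^2)=\oh(n^3)$, while the intermediate representation never exceeds a constant number of intervals per level. To repair your write-up you would either have to prove the missing polynomial bound on $|\tilde T_i|$ or restructure the algorithm along these lines.
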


Remark that the set of feasible solutions for the last two constraints is $S_{n-1} = R_{n-1}^{[a_{n-1}]} \cap (R_n^{[a_n]})^{[a_{n-1}]} = R_{n-1}^{[a_{n-1}]} \cap R_n^{[a_{n-1}]}$. Therefore, the next lemma states the crucial argument of the algorithm.
\begin{figure}
    \centering
    \begin{tikzpicture}
        \def\x{.82}\def\y{.5}\usetikzlibrary{patterns,decorations.pathreplacing}

\draw (0,4*\y) node [left] {$R_n^{[a_n]}$:} -- (16*\x,4*\y);

\draw (4*\x,4*\y-0.2) node [below] {$a_{n-1}$} -- (4*\x,4*\y+0.2);
\draw (8*\x,4*\y-0.2) -- (8*\x,4*\y+0.2);
\draw (12*\x,4*\y-0.2) -- (12*\x,4*\y+0.2);
\fill [green!50!white,draw=green!70!black] (0*\x,4*\y-.1) rectangle (0.8*\x,4*\y+.1);
\fill [green!50!white,draw=green!70!black] (9.7*\x,4*\y-.1) rectangle (16*\x,4*\y+.1);
\draw (0,4*\y-0.2) node [below] {$0$} -- (0,4*\y+0.2);
\draw (16*\x,4*\y-0.2) node [below] {$a_n$}  -- (16*\x,4*\y+0.2);

\draw (0,2*\y) -- (16*\x,2*\y);

\draw [dashed] (0*\x,2*\y-.1) -- (0*\x,5*\y-.3);
\draw [dashed] (2.4*\x,2*\y-.1) -- node [rotate=90] {} (2.4*\x,5*\y-.3);
\fill [lightgray,draw=darkgray] (0*\x,2*\y-.1) rectangle (2.4*\x,2*\y+.1);

\draw [dashed] (3.2*\x,2*\y-.1) -- node [rotate=90] {} (3.2*\x,5*\y-.3);
\draw [dashed] (4*\x,2*\y-.1) -- node [rotate=90] {} (4*\x,5*\y-.3);

\fill [lightgray,draw=darkgray] (3.2*\x,2*\y-.1) rectangle (4*\x,2*\y+.1);

\draw [white, pattern=north east lines, pattern color=red] (8*\x,2*\y-.1) rectangle (8.8*\x,2*\y+.1);

\draw [dashed] (0.8*\x+8*\x,2*\y-.1) -- node [rotate=90] {} (0.8*\x+8*\x,5*\y-.3);
\draw [dashed] (2.4*\x+8*\x,2*\y-.1) -- node [rotate=90] {} (2.4*\x+8*\x,5*\y-.3);

\fill [lightgray,draw=darkgray] (0.8*\x+8*\x,2*\y-.1) rectangle (2.4*\x+8*\x,2*\y+.1);

\draw [dashed] (3.2*\x+8*\x,2*\y-.1) -- node [rotate=90] {} (3.2*\x+8*\x,5*\y-.3);
\draw [dashed] (4*\x+8*\x,2*\y-.1) -- node [rotate=90] {} (4*\x+8*\x,5*\y-.3);

\draw [white, pattern=north east lines, pattern color=red] (12*\x,2*\y-.1) rectangle (12.8*\x,2*\y+.1);
\fill [lightgray,draw=darkgray] (3.2*\x+8*\x,2*\y-.1) rectangle (4*\x+8*\x,2*\y+.1);

\draw [dashed] (0.8*\x+12*\x,2*\y-.1) -- node [rotate=90] {} (0.8*\x+12*\x,5*\y-.3);
\draw [dashed] (1.7*\x+12*\x,2*\y-.1) -- node [rotate=90] {} (1.7*\x+12*\x,5*\y-.3);

\draw [white, pattern=north east lines, pattern color=red] (13.7*\x,2*\y-.1) rectangle (14.4*\x,2*\y+.1);
\fill [lightgray,draw=darkgray] (0.8*\x+12*\x,2*\y-.1) rectangle (1.7*\x+12*\x,2*\y+.1);

\draw [white, pattern=north east lines, pattern color=red] (3.2*\x+12*\x,2*\y-.1) rectangle (4*\x+12*\x,2*\y+.1);
    
\draw [decorate,decoration={brace,amplitude=10*\y,mirror}] (0*\x,1.7*\y) -- node[below,xshift=28*\x] {$Y_0\!=\!R_{n-1}^{[a_{n-1}]}$} (4*\x,1.7*\y);
\draw [decorate,decoration={brace,amplitude=10*\y,mirror}] (8.8*\x,1.7*\y) -- node[below=7*\y] {$Y_2$} (12*\x,1.7*\y);
\draw [decorate,decoration={brace,amplitude=10*\y,mirror}] (0.8*\x+12*\x,1.7*\y) -- node[below=7*\y] {$Y_3$} (1.7*\x+12*\x,1.7*\y);
    
\draw (0,0) -- (16*\x,0);

\fill [green!50!white,draw=green!70!black] (0*\x,-.1) rectangle (0.8*\x,.1);
\fill [green!50!white,draw=green!70!black] (9.7*\x,-.1) rectangle (10.4*\x,.1);
\fill [green!50!white,draw=green!70!black] (3.2*\x+8*\x,-.1) rectangle (12*\x,.1);
\fill [green!50!white,draw=green!70!black] (0.8*\x+12*\x,-.1) rectangle (1.7*\x+12*\x,.1);
\draw (4*\x,-0.2) node [below] {$a_{n-1}$} -- (4*\x,0.2);
\draw (8*\x,-0.2) -- (8*\x,0.2);
\draw (12*\x,-0.2) -- (12*\x,0.2);
\draw (0,-0.2) node [below] {$0$} -- (0,0.2);
\draw (16*\x,-0.2) node [below] {$a_n$} -- (16*\x,.2);
    
\draw [decorate,decoration={brace,amplitude=10*\y,mirror}] (0*\x,-0.5*\y) -- node[below=7*\y,xshift=4*\x] {$D_0$} (0.8*\x,-0.5*\y);
\draw [decorate,decoration={brace,amplitude=10*\y,mirror}] (9.7*\x,-0.5*\y) -- node[below=7*\y] {$D_2$} (12*\x,-0.5*\y);
\draw [decorate,decoration={brace,amplitude=10*\y,mirror}] (0.8*\x+12*\x,-0.5*\y) -- node[below=7*\y] {$D_3$} (1.7*\x+12*\x,-0.5*\y);
    \end{tikzpicture}
    \caption{An example of four required intervals to represent \(R_{n-1}^{[a_{n-1}]} \cap R_n^{[a_{n-1}]}\) in \Cref{at-most-four-smallest-intervals}}
    \label{fig:four-intervals}
\end{figure}
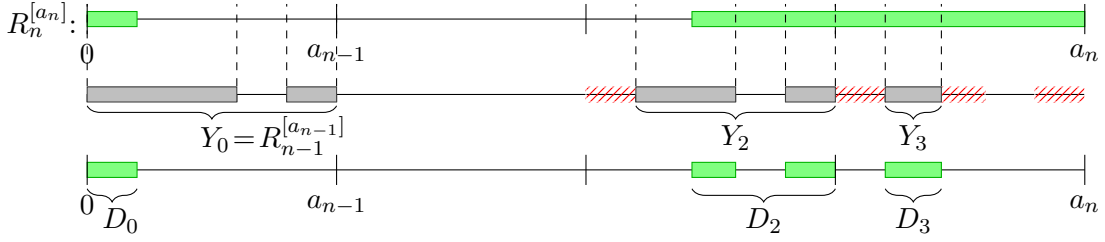
\begin{lemma}\label{at-most-four-smallest-intervals}
    The intersection $R_{n-1}^{[a_{n-1}]} \cap R_n^{[a_{n-1}]}$ can always be represented by the disjoint union $U \subseteq R_n^{[a_n]}$ of only constant many intervals in $R_n^{[a_n]}$ such that
    \begin{enumerate}[(a)]
        \item \(U^{[a_{n-1}]} = R_{n-1}^{[a_{n-1}]} \cap R_n^{[a_{n-1}]}\) and
        \item $u \equiv r \pmod {a_{n-1}}$ implies $u \leq r$ for all $u\in U$, $r \in R_n^{[a_n]}$.
    \end{enumerate}
\end{lemma}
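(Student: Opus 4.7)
The plan is to apply \Cref{modular-intersections-disjoint} with $a = a_{n-1}$, $b = a_n/a_{n-1}$ (which is a positive integer by the harmonic property), $A = R_n$, and $B = R_{n-1}$. This yields a disjoint decomposition
\[R_{n-1}^{[a_{n-1}]} \cap R_n^{[a_{n-1}]} = \Bigl(\bigcup_{i=0}^{b-1} D_i\Bigr)^{[a_{n-1}]}\]
with $D_i = R_n^{[a_n]} \cap Y_i$ and $Y_i = i a_{n-1} + (R_{n-1}^{[a_{n-1}]} \setminus \bigcup_{j<i} D_j^{[a_{n-1}]})$. Setting $U := \bigcup_{i=0}^{b-1} D_i \subseteq R_n^{[a_n]}$ then immediately yields property~(a).

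For property~(b), I will exploit the ``first hit'' structure of the construction: each residue $\rho$ in the intersection lies in exactly one $D_i^{[a_{n-1}]}$, namely for the smallest $i$ such that $i a_{n-1} + \rho \in R_n^{[a_n]}$. Formally, given $u \in D_i$ and $r \in R_n^{[a_n]}$ with $u \equiv r \pmod{a_{n-1}}$, I set $\rho := u^{[a_{n-1}]} = r^{[a_{n-1}]}$ and write $r = k a_{n-1} + \rho$ with $k \in \{0,\dots,b-1\}$. Assuming $k < i$ for contradiction, pairwise disjointness of the $D_j^{[a_{n-1}]}$ and $\rho \in D_i^{[a_{n-1}]}$ imply $\rho \notin \bigcup_{j<k} D_j^{[a_{n-1}]}$, so $r \in Y_k$ and, because $r \in R_n^{[a_n]}$, also $r \in D_k$. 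Hence $\rho \in D_k^{[a_{n-1}]}$, contradicting disjointness with $D_i^{[a_{n-1}]}$. Therefore $k \geq i$ and $r \geq i a_{n-1} + \rho = u$.

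For the bound on the number of intervals representing $U$, I will use that each of $R_n^{[a_n]}$ and $R_{n-1}^{[a_{n-1}]}$ consists of at most two intervals (\Cref{fig:modular-projection}). For each of the at most two maximal subintervals $[p,q]$ of $R_n^{[a_n]}$, I claim that only $D_{i_*}$ and $D_{i_*+1}$ can be nonempty, where $i_* = \lfloor p/a_{n-1}\rfloor$. The reason is an \emph{exhaustion} observation: any block $[i a_{n-1},(i+1)a_{n-1})$ contained entirely in $R_n^{[a_n]}$ satisfies $D_i = Y_i$, so $D_i^{[a_{n-1}]}$ absorbs all remaining residues, forcing $Y_j = \emptyset$ for all later $j > i$ that still sit inside this subinterval. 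Thus the first full block following the partial start $D_{i_*}$ terminates the contribution. Each single $D_i$ is the intersection of one interval (the block restriction of $R_n^{[a_n]}$) with at most two shifted intervals (the surviving part of $R_{n-1}^{[a_{n-1}]}$), hence itself consists of at most two intervals, and multiplying constants bounds $|U|$ absolutely.

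The main technical obstacle I anticipate is making the exhaustion argument airtight at the seams: when $R_n^{[a_n]}$ splits into two components wrapping around $[0,a_n)$, or when $R_{n-1}^{[a_{n-1}]}$ wraps around $[0,a_{n-1})$, the running set difference $R_{n-1}^{[a_{n-1}]} \setminus \bigcup_{j<i} D_j^{[a_{n-1}]}$ must be tracked carefully across both components so that residues absorbed by one component are not double-counted when the second component starts.
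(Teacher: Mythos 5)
Your proof is correct and follows essentially the same route as the paper: both construct $U=\bigcup_i D_i$ via \Cref{modular-intersections-disjoint} with $a=a_{n-1}$, $b=a_n/a_{n-1}$, and both bound the number of nonempty $D_i$ by the observation that a block $[ia_{n-1},(i+1)a_{n-1})$ fully contained in $R_n^{[a_n]}$ exhausts $R_{n-1}^{[a_{n-1}]}$ and forces all later $Y_j$ to be empty (the paper phrases this as a global pigeonhole giving at most three nonempty $D_i$, you organize it per component of $R_n^{[a_n]}$, but the idea is identical). Your explicit first-hit argument for property~(b) is a welcome elaboration of what the paper dismisses as ``follows by construction.''
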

Here the former property states that indeed the intervals in $U$ are a proper representation for the last two constraints. The important property is the latter; in fact, it ensures that $U$ is the best possible representation in the sense that $U$ consists of the \emph{smallest} intervals possible (see \Cref{fig:four-intervals}).

\begin{proof}[Proof of \Cref{at-most-four-smallest-intervals}]
    (a). By defining $D_i = Y_i \cap R_n^{[a_n]}$ and
\[Y_i = i a_{n-1} + (R_{n-1}^{[a_{n-1}]} \setminus \bigcup_{j=0}^{i-1} D_j^{[a_{n-1}]})\]
    for all $i \in \{0,\dots,a_n/a_{n-1}-1\}$ \Cref{modular-intersections-disjoint} proves the claim (cf. \Cref{fig:four-intervals}).
    (b) follows by construction.

    It remains to show that $\bigcup_i D_i$ is the union of only constant many disjoint intervals. Apparently, the intervals are disjoint by construction.
    
    We claim that there are at most three non-empty sets $D_i$. Assume there are at least four non-empty translates $D_i$, namely $D_i,D_j,D_k,D_{\ell}$. Then, since $R_n$ is an interval it holds for at least two $p,q \in \{i,j,k,\ell\}$ that the \emph{full interval translates} $F_p = [pa_{n-1},(p+1)a_{n-1})$ and $F_q = [qa_{n-1},(q+1)a_{n-1})$ are subsets of $R_n^{[a_n]}$. For $p$ (and also for $q$) we get
    
    \[D_p^{[a_{n-1}]} = ({\underbrace{Y_p}_{\subseteq F_p}} \cap R_n^{[a_n]})^{[a_{n-1}]} = Y_p^{[a_{n-1}]} = R_{n-1}^{[a_{n-1}]} \setminus \bigcup_{j=0}^{p-1}D_j^{[a_{n-1}]}\]
    which implies with $\bigcup_{j=0}^{p-1} D_j^{[a_{n-1}]} \subseteq R_{n-1}^{[a_{n-1}]}$ that
    \[\bigcup_{j=0}^p D_j^{[a_{n-1}]} = D_p^{[a_{n-1}]} \cup \bigcup_{j=0}^{p-1} D_j^{[a_{n-1}]} = R_{n-1}^{[a_{n-1}]}.\]
    Then it follows
    \(\bigcup_{j=0}^p D_j^{[a_{n-1}]} = R_{n-1}^{[a_{n-1}]} = \bigcup_{j=0}^q D_j^{[a_{n-1}]}\).
    W.l.o.g. let $p < q$. Then $D_q = Y_q \cap R_n^{[a_n]}$ is empty since
    \[Y_q
    = qa_{n-1} + \left(R_{n-1}^{[a_{n-1}]} \setminus \bigcup_{j=0}^{q-1} D_j^{[a_{n-1}]}\right)
    \subseteq qa_{n-1} + \left(R_{n-1}^{[a_{n-1}]} \setminus R_{n-1}^{[a_{n-1}]}\right)
    \]
    is empty and we have a contradiction.
    
    Using the same case distinctions as in the proof of \Cref{alpha-intersection-possibilities} one can show that each set $D_i$ consist of at most two intervals. Therefore, all the non-empty sets $D_i$ consist of at most $3\cdot 2 = 6$ intervals in total. In fact, one can improve this bound to a total number of at most $4$ intervals (see \Cref{fig:four-intervals}) by a more sophisticated case distinction.
\end{proof}

This admits an algorithm using an aggregation argument as follows. For constraints $n$ and $n-1$ we use \Cref{at-most-four-smallest-intervals} to compute disjoint intervals $E_1,\dots,E_k \subseteq R_n^{[a_n]}$ (representing the constraints $n$ and $n-1$) where $k \leq C$ for a small constant $C$. If $k\geq 1$ then use \Cref{alg:feasible} to check the feasibility of the instances $\mathcal{I}_1,\dots,\mathcal{I}_k$ defined by
\[(\mathcal{I}_j) \qquad \min \set{s|s^{[a_i]} \in R_i^{[a_i]} \forall i=1,\dots,n-2,\;s^{[a_n]} \in E_j^{[a_n]},s \in \mathbb{Z}_{\geq 0}}.\]
If none of the instances $\mathcal{I}_1,\dots,\mathcal{I}_k$ admits a solution then the original instance can not be feasible. Assume that there is at least one feasible instance. Now, since $E_1,\dots,E_k$ are disjoint exactly one of them contains the optimum value for $s$.  W.l.o.g. assume that $E_1 < \dots < E_k$. Then there is a smallest index $j$ such that $\mathcal{I}_j$ is feasible and we solve $\mathcal{I}_j$ recursively to find the optimum value.
Together this yields an algorithm running in time $n\cdot C \cdot\oh(n^2) = \oh(n^3)$.

\section{Uniprocessor Real-Time Scheduling}
\label{real-time-scheduling}

In real-time systems an important question is to ask for the \emph{worst-case response time} of a system. Nguyen et al. proposed a new algorithm \cite{arXiv/1912.01161} to compute it in polynomial time for preemptive sporadic tasks $\tau_1,\dots,\tau_n$ with harmonic periods $T_i \geq 0$ running on a uniprocessor platform. Be aware that they assume the harmonic property in the opposite direction, i.e. $T_i/T_{i+1} \in \mathbb{Z}$. Their algorithm even allows the task execution to be delayed by some release jitter $J_i$. However, their algorithm depends on a heuristic component which may fail to find a solution. In fact, the fundamental computation problem can be expressed as a BMS instance which immediately implies a robust solution in time $\mathcal{O}(n^3)$ with our algorithm. Nevertheless, it can be solved even more efficiently in time $\mathcal{O}(n)$ which we describe here.

We adapt the notation of Nguyen et al. and extend it to our needs. The jobs of task $\tau_i$ have the processing time $C_i$ and we define $c_i = \sum_{t=i+1}^{n-1}C_t$ to accumulate the last of them. The utilization of task $\tau_i$ is denoted by $U_i = C_i/T_i$ and it holds that $\sum_{t=1}^{n-1} U_t < 1$. In Section 5.4.1 of \cite{arXiv/1912.01161} Nguyen et al. describe that also $x_1 = 1$ may be assumed. The system to solve is
\begin{equation}\label{response-system}
    \begin{split}
        \min\,\{\,x_n \mid \; &J_i + T_ix_i \leq J_n + T_nx_n, \\
        & J_n+T_nx_n-c_i \leq J_i+T_ix_i \;\; \forall i\leq n-1\,\}
    \end{split}
\end{equation}
which can be formulated as the following BMS instance:
\begin{equation}\label{response-bms}
    \min\Set{x_n|\ceil*{\frac{J_i-J_n}{T_n}} \leq x_n - \frac{T_i}{T_n}x_i \leq \floor*{\frac{J_i-J_n+c_i}{T_n}} \;\; \forall i\leq n-1}
\end{equation}

\begin{lemma}\label{uniqueness}
    If $i < j \leq n$ and $(c_i+c_j)/T_j < 1$ then in terms of variable $x_i$ there is at most one feasible value for variable $x_j$.
\end{lemma}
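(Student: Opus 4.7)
The plan is to derive bounds on $x_j$ that depend only on $x_i$ by transitively combining the constraints of \eqref{response-system} for indices $i$ and $j$ through $x_n$, and then to observe that the hypothesis $(c_i+c_j)/T_j<1$ forces the resulting interval to be too narrow to contain two integers.

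First I would start from the two constraints of \eqref{response-system} written for $i$, namely
\[ J_n+T_nx_n-c_i \;\leq\; J_i+T_ix_i \;\leq\; J_n+T_nx_n, \]
together with the analogous pair for $j$ (using the convention $c_n=0$ so that $j=n$ also fits),
\[ J_n+T_nx_n-c_j \;\leq\; J_j+T_jx_j \;\leq\; J_n+T_nx_n. \]
Eliminating $J_n+T_nx_n$ by chaining the inequalities yields
\[ J_i+T_ix_i-c_j \;\leq\; J_j+T_jx_j \;\leq\; J_i+T_ix_i+c_i, \]
so after dividing by $T_j>0$ and invoking integrality of $x_j$, the feasible values of $x_j$ given $x_i$ are precisely the integers in the interval $[\ell,u]$, where
\[ \ell=\frac{J_i+T_ix_i-J_j-c_j}{T_j}, \qquad u=\frac{J_i+T_ix_i-J_j+c_i}{T_j}. \]

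Next I would compute the width of this interval, which is simply $u-\ell=(c_i+c_j)/T_j$. By hypothesis this is strictly less than $1$. The elementary fact I would invoke is that any real interval of length strictly less than $1$ contains at most one integer; concretely, the number of integers in $[\ell,u]$ equals $\lfloor u\rfloor-\lceil\ell\rceil+1\leq(u-\ell)+1<2$, so it is at most $1$. This yields the claim.

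I do not expect a genuine obstacle here; the only subtle point is handling the case $j=n$ uniformly, which is why it is convenient to set $c_n=0$ so the $j=n$ constraint pair of \eqref{response-system} collapses to the trivial identity $J_n+T_nx_n=J_n+T_nx_n$, and the derivation above still goes through unchanged. The proof is otherwise a direct two-line estimate.
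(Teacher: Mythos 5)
Your proof is correct and follows essentially the same route as the paper's: chain the two constraints of \eqref{response-system} through $J_n+T_nx_n$ to sandwich $J_j+T_jx_j$ between $J_i+T_ix_i-c_j$ and $J_i+T_ix_i+c_i$, then observe the resulting interval for $x_j$ has width $(c_i+c_j)/T_j<1$ and hence contains at most one integer, with the case $j=n$ handled via $c_n=0$. The only cosmetic difference is that the paper additionally invokes the harmonic property to pull the integer $\tfrac{T_i}{T_j}x_i$ outside the floor and ceiling (a form it needs later for the bounds $\ell_j^{(i)},u_j^{(i)}$), whereas your width-of-interval count works without it.
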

\begin{proof}
    If $j < n$ then by combining the constraints for $i$ and $j$ in \eqref{response-system} we find
    \begin{align*}
        T_ix_i + J_i - J_n \; &\leq \; T_jx_j + J_j-J_n+c_j\quad\text{ and}\\
        T_jx_j+J_j-J_n \; &\leq \; T_ix_i + J_i - J_n + c_i
    \end{align*}
    which with the harmonic property and the integrality of $x_j$ yields
    \begin{equation}\label{next-variable-bounds}
        \frac{T_i}{T_j}x_i+\ceil*{\frac{J_i-J_j-c_j}{T_j}} \leq x_j \leq \frac{T_i}{T_j}x_i+\floor*{\frac{J_i-J_j+c_i}{T_j}}.
    \end{equation}
    However, if $j=n$ then $c_j = \sum_{t=n+1}^{n-1}C_t = 0$ and thus \eqref{next-variable-bounds} follows from \eqref{response-system} too (cf. \eqref{response-bms}). Now by simply dropping the roundings we obtain in both cases that
    \[
        \frac{T_i}{T_j}x_i+\floor*{\frac{J_i-J_j+c_i}{T_j}} - \left(\frac{T_i}{T_j}x_i+\ceil*{\frac{J_i-J_j-c_j}{T_j}}\right)
        \leq \frac{c_i+c_j}{T_j} < 1
    \]
    which proves the claim.
\end{proof}

According to \eqref{next-variable-bounds} we define interval bounds $\ell_j^{(i)}(z)$ and $u_j^{(i)}(z)$ to denote the feasible values for variable $x_j$ in terms of variable $x_i$ where $z$ states a value for variable $x_i$, i.e. \[\ell_j^{(i)}(z) = \frac{T_i}{T_j}z+\ceil*{\frac{J_i-J_j-c_j}{T_j}} \qquad \text{and} \qquad u_j^{(i)}(z) = \frac{T_i}{T_j}z+\floor*{\frac{J_i-J_j+c_i}{T_j}}.\]
Thus, \eqref{next-variable-bounds} is equivalent to $x_j \in [\ell_j^{(i)}(x_i),u_j^{(i)}(x_i)]$ and if $(c_i+c_j)/T_j < 1$ then it either holds that $\ell_j^{(i)}(x_i) = x_j = u_j^{(i)}(x_i)$ or there is no solution at all.

Fortunately, there is always a chain of variables such that the value of every next variable can be determined by knowing the value of the previous. The following lemma is crucial.

\begin{figure}
    \centering
    \begin{tikzpicture}[scale=0.859]
        \usetikzlibrary{arrows.meta}

\fill [lightgray,draw=black] (0,0) rectangle node [color=black,minimum height=1cm] (x_k0) {$x_1$} (1,1);
\fill [lightgray,draw=black] (3,0) rectangle node [minimum height=1cm] (x_k1) {} (4,1);
\fill [lightgray,draw=black] (5,0) rectangle node [color=black,minimum height=1cm] (x_k2) {$x_i$} (6,1);
\draw (6,0) rectangle node [minimum height=1cm] {$x_{i+1}$} (7,1);
\fill [lightgray,draw=black] (11,0) rectangle node [color=black,minimum height=1cm] (x_k3) {$x_k$} (12,1);
\fill [lightgray,draw=black] (13,0) rectangle node [minimum height=1cm] (x_k4) {} (14,1);
\fill [lightgray,draw=black] (14,0) rectangle node [minimum height=1cm] (x_k5) {} (15,1);
\fill [lightgray,draw=black] (16,0) rectangle node [color=black,minimum height=1cm] (x_k6) {$x_n$} (17,1);

\draw (0,0) rectangle (17,1);

\draw[very thick] (1,-0.5) -- (1,1.5);
\draw[very thick] (4,-0.5) -- (4,1.5);
\draw[very thick] (6,-0.5) -- (6,1.5);
\draw[very thick] (12,-0.5) -- (12,1.5);
\draw[very thick] (14,-0.5) -- (14,1.5);
\draw[very thick] (15,-0.5) -- (15,1.5);

\draw [-{Stealth[scale=2]},rounded corners=5pt] (x_k0) -- + (0,1.5) -| (x_k1) -- + (0,-1.5) -| (x_k2) -- + (0,+1.5) -| (x_k3) -- + (0,-1.5) -| (x_k4) -- + (0,+1.5) -| (x_k5) -- + (0,-1.5) -| (x_k6);
    \end{tikzpicture}
    \caption{The variable revealing flow with vertical lines between blocks of equal periods}
    \label{fig:revealing-flow}
\end{figure}
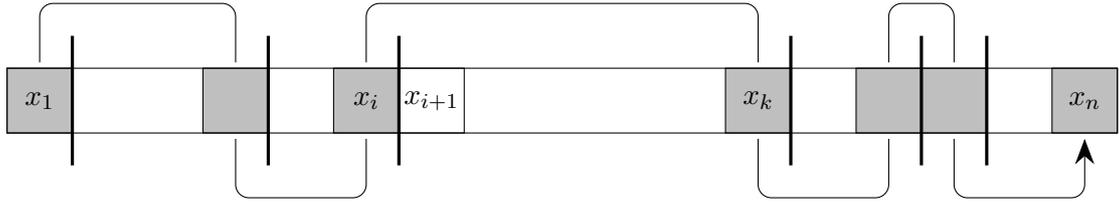

\begin{lemma}\label{last-block-variable-is-unique}
    If $i < n$ and $k=\max\set{t\leq n|T_{i+1} = T_t}$ then there is at most one feasible value for variable $x_k$.
\end{lemma}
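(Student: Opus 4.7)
The plan is to invoke Lemma~\ref{uniqueness} with $j=k$, so the entire task reduces to verifying its hypothesis $(c_i+c_k)/T_k<1$ in the given setting. Once this is done, Lemma~\ref{uniqueness} yields that, in terms of $x_i$, the value of $x_k$ is forced, proving the claim.

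For the verification, I would start by splitting
\[
    c_i+c_k \;=\; \sum_{t=i+1}^{n-1}\!\!C_t \;+\; \sum_{t=k+1}^{n-1}\!\!C_t \;=\; \sum_{t=i+1}^{k}\!\!C_t \;+\; 2\!\!\sum_{t=k+1}^{n-1}\!\!C_t,
\]
with the convention that the rightmost sum is empty when $k=n$. Dividing by $T_k$ and rewriting $C_t=U_tT_t$, it remains to bound the two contributions $\sum_{t=i+1}^{k} U_t\,T_t/T_k$ and $2\sum_{t=k+1}^{n-1} U_t\,T_t/T_k$.

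The first is easy: by definition of $k$ we have $T_{i+1}=T_{i+2}=\dots=T_k$, hence $T_t/T_k=1$ for $t\in\{i+1,\dots,k\}$ and this sum equals $\sum_{t=i+1}^{k}U_t$. The second is the main (but minor) point. If $k<n$, then by maximality of $k$ the harmonic ratio $T_k/T_{k+1}$ is an integer strictly greater than $1$, so $T_{k+1}\leq T_k/2$; together with the harmonic property this gives $T_t\leq T_k/2$ for every $t\geq k+1$, whence the factor $2$ is absorbed and the second sum is at most $\sum_{t=k+1}^{n-1}U_t$. If $k=n$ instead, the second sum is simply empty, so no bound is required.

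Combining both parts yields
\[
    \frac{c_i+c_k}{T_k} \;\leq\; \sum_{t=i+1}^{n-1}U_t \;\leq\; \sum_{t=1}^{n-1}U_t \;<\; 1,
\]
which is exactly the hypothesis of Lemma~\ref{uniqueness} for $j=k$. Applying that lemma finishes the proof.
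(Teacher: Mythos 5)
Your proposal is correct and follows essentially the same route as the paper: reduce to \Cref{uniqueness} with $j=k$, split $c_i+c_k$ into the block $\{i+1,\dots,k\}$ where $T_t=T_k$ and the tail $t>k$ where maximality of $k$ plus the harmonic property give $T_t\leq T_k/2$, absorbing the factor $2$. The only blemish is the displayed identity $c_i+c_k=\sum_{t=i+1}^{k}C_t+2\sum_{t=k+1}^{n-1}C_t$, whose first sum should be capped at $n-1$ when $k=n$ (since $c_i$ never includes $C_n$); your final bound $\sum_{t=i+1}^{n-1}U_t<1$ is nonetheless the correct one.
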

\begin{proof}
    If $k < n-1$ then it holds by the harmonic property and the maximality of $k$ that $T_k \geq 2T_{k+1} \geq 2T_{k+2} \geq \dots \geq 2T_{n-1}$ and thus $T_t/T_k \leq 1/2$ for all $t=k+1,\dots,n-1$. Hence,
    \[\frac{c_i+c_k}{T_k} = \sum_{t=i+1}^{n-1} U_t\frac{T_t}{T_k} + \sum_{t=k+1}^{n-1} U_t\frac{T_t}{T_k} = \sum_{t=i+1}^k U_t\underbrace{\frac{T_t}{T_k}}_{=1} + 2\sum_{t=k+1}^{n-1} U_t\underbrace{\frac{T_t}{T_k}}_{\leq 1/2} \leq \sum_{t=i+1}^{n-1} U_t < 1.\]
    If otherwise $k \geq n-1$ then $c_k = 0$ and hence
    \[\frac{c_i+c_k}{T_k} = \frac{c_i}{T_k} = \sum_{t=i+1}^{n-1}U_t\underbrace{\frac{T_t}{T_k}}_{=1} = \sum_{t=i+1}^{n-1} U_t < 1.\]
    By \Cref{uniqueness} this proves the claim.
\end{proof}

This gives rise to the following algorithm. By iterating \Cref{last-block-variable-is-unique} and starting with $x_1 = 1$ we can reveal the last variable of each block of indices of equal periods (cf. \Cref{fig:revealing-flow}). Finally, this reveals the variable $x_n$ and we only need to assure that the value of $x_n$ admits feasible values for variables which are not revealed so far.
Apparently we may restate the constraints of \eqref{response-system} as
\[\ceil*{\frac{J_n-J_j-c_j+T_nx_n}{T_j}} \leq x_j \leq \floor*{\frac{J_n-J_j+T_nx_n}{T_j}} \qquad \forall j=1,\dots,n-1.\]
Therefore, we can simply compare these bounds to assure the existence of a feasible value for each variable $x_j$. See \Cref{alg:reveal} for a formal description.

\begin{algorithm}
    \begin{algorithmic}
        \Procedure{Reveal}{}
            \State $x_1 \gets 1$
            \State $k \gets 1$
            \While{$k < n$}
                \State $i \gets k$
                \State $k \gets \max\set{t \leq n| T_{i+1} = T_t}$
                \If{$\ell_k^{(i)}(x_i) \neq u_k^{(i)}(x_i)$}
                    \State \Return{$-1$}
                \Else
                    \State $x_k \gets \ell_k^{(i)}(x_i)$ \Comment{\Cref{last-block-variable-is-unique}}
                \EndIf
            \EndWhile
            \For{$j=1,\dots,n-1$}
                \If{$\ceil*{\frac{J_n-J_j-c_j+T_nx_n}{T_j}} > \floor*{\frac{J_n-J_j+T_nx_n}{T_j}}$} \Comment{no feasible solution for $x_j$}
                   \State \Return{$-1$}
                \EndIf
            \EndFor
            \State \Return{$x_n$}
        \EndProcedure
    \end{algorithmic}
    \caption{Variable revealing flow}
    \label{alg:reveal}
\end{algorithm}

\begin{observation}
    In fact, by a more sophisticated investigation the number of index blocks of equal periods can be bounded by a constant and thus, the \textup{\textbf{while}} loop reveals $x_n$ in constant time. Therefore, the final feasibility test appears to be the only computational bottleneck.
\end{observation}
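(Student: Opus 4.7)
The plan is to accelerate the while loop via a dichotomy between ``tight'' and ``critical'' block transitions. For a block boundary sending the end of one block (index $i$) to the end of the next (index $k$), I would call it \emph{tight} when $(c_i+c_k)/T_k<1$, so that \Cref{uniqueness} forces $x_k = (T_i/T_k)\,x_i + \gamma$ for an explicit integer constant $\gamma$ depending on $J_i,J_k,c_i,c_k,T_k$, and \emph{critical} otherwise. A maximal run of tight transitions composes to a single linear map in $x_1$, whose two coefficients can be obtained in a one-shot left-to-right preprocessing sweep. The modified loop then needs to iterate only at critical transitions, with each intervening run of tight transitions resolved in $\oh(1)$ by evaluating the precomputed map.

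The core of the proof is then to bound the number of critical transitions by an absolute constant. The starting identity is
\[
\frac{c_i+c_k}{T_k} \;=\; \sum_{t=i+1}^{k} U_t \;+\; 2\sum_{t=k+1}^{n-1} U_t\,\frac{T_t}{T_k}.
\]
Since $\sum_{t=1}^{n-1}U_t<1$, any critical transition (left-hand side at least $1$) must claim $\Omega(1)$ of the total utilization, split between the inter-block slice $\sum_{i+1}^{k}U_t$ and the geometrically-weighted tail. I would process critical transitions from right to left, assigning each the freshest unclaimed $\Omega(1)$-mass piece of utilization: either the inter-block slice, or a suitable prefix of the weighted tail. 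The harmonic property $T_{t+1}\mid T_t$ with strict drops gives $T_t/T_k\leq 2^{-\ell}$ when $T_t$ lies $\ell$ harmonic levels below $T_k$, which is what makes disjoint prefixes of the weighted tail geometrically separable.

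The main obstacle is exactly this charging step, since the weights $T_t/T_k$ depend on $k$, so the same raw utilizations $U_t$ may appear in the tails of several critical transitions with different weights. The trick I would use is to amortize: two critical transitions with overlapping tail charges must have periods $T_k,T_{k'}$ differing by at least one harmonic level, so their tail weights differ by a factor of at least $2$, and the transition with smaller weight can instead be charged to its inter-block slice (which is then forced to be $\Omega(1)$). A careful bookkeeping should then yield a bound independent of $n$, mirroring and refining the reasoning sketched in the commented-out lemma about ``bounded forks''. Combined with the precomputed linear maps for tight runs, this establishes that $x_n$ is revealed in $\oh(1)$ time.
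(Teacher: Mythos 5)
The paper offers no proof of this observation (the closest artifact is a lemma the authors left commented out), so the question is whether your plan would actually establish the claim --- and it would not. The quantity the \textbf{while} loop iterates over is the number of maximal blocks of equal periods, and your argument never bounds that; it bounds the number of \emph{critical} block transitions, i.e.\ block-end-to-block-end steps $i \to k$ with $(c_i+c_k)/T_k\geq 1$. But \Cref{last-block-variable-is-unique} already shows that \emph{every} such step satisfies $(c_i+c_k)/T_k<1$: in your dichotomy all transitions are tight and none are critical, so a constant bound on an empty set says nothing about the number of loop iterations. That number can genuinely be $\Theta(n)$ --- take $T_i=2^{n-i}$ with $C_i$ small enough that $\sum_t U_t<1$, so every block is a singleton --- which means the observation's literal phrasing cannot be recovered by any counting argument; what is provable (and what the suppressed lemma states) is that the number of indices $i$ with $(c_i+c_{i+1})/T_{i+1}\geq 1$ at a period change is constant, and exploiting that requires redesigning the loop to take \emph{maximal} jumps via \Cref{uniqueness} rather than block-by-block steps. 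You never make that redesign.

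Your repair --- composing the linear maps $x_i\mapsto x_k$ over a maximal tight run --- does not rescue the claim either: the ``one-shot left-to-right preprocessing sweep'' visits every block and therefore itself costs time proportional to the number of blocks. You have only relocated the linear-time work from the loop body into a preprocessing phase, which is no improvement over the $\oh(n)$ the loop already achieves and does not make the revealing phase constant-time. Finally, the charging argument you sketch for bounding critical transitions is precisely the hard part (overlapping tails whose weights $T_t/T_k$ depend on $k$), and you leave it at ``careful bookkeeping should yield a bound''; the paper's own suppressed version resolves this with an explicit partition of the utilization into disjoint blocks $B_1,B_2,B_1',B_2',\dots$ each of mass $\Omega(1)$ and a threshold on period ratios, and without some such concrete accounting your constant is not established.
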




\bibliography{ref}

\begin{thebibliography}{10}

\bibitem{DBLP:conf/focs/AgrawalB99}
Manindra Agrawal and Somenath Biswas.
\newblock Primality and identity testing via chinese remaindering.
\newblock In {\em Proc. {FOCS} 1999}, pages 202--209, 1999.

\bibitem{DBLP:journals/jsa/AnssiKGT13}
Saoussen Anssi, Stefan Kuntz, S{\'{e}}bastien G{\'{e}}rard, and Fran{\c{c}}ois
  Terrier.
\newblock On the gap between schedulability tests and an automotive task model.
\newblock {\em J. Syst. Archit.}, 59(6):341--350, 2013.

\bibitem{DBLP:journals/algorithmica/Bar-YehudaR01}
Reuven Bar{-}Yehuda and Dror Rawitz.
\newblock Efficient algorithms for integer programs with two variables per
  constraint.
\newblock {\em Algorithmica}, 29(4):595--609, 2001.

\bibitem{DBLP:journals/rts/BaruahRH90}
Sanjoy~K. Baruah, Louis~E. Rosier, and Rodney~R. Howell.
\newblock Algorithms and complexity concerning the preemptive scheduling of
  periodic, real-time tasks on one processor.
\newblock {\em Real-Time Systems}, 2(4):301--324, 1990.

\bibitem{DBLP:conf/rtss/BonifaciMMW13}
Vincenzo Bonifaci, Alberto Marchetti{-}Spaccamela, Nicole Megow, and Andreas
  Wiese.
\newblock Polynomial-time exact schedulability tests for harmonic real-time
  tasks.
\newblock In {\em Proc. {RTSS} 2013}, pages 236--245. {IEEE} Computer Society,
  2013.

\bibitem{Conforti:2014:IP:2765770}
Michele Conforti, Gerard Cornu{\'{e}}jols, and Giacomo Zambelli.
\newblock {\em Integer Programming}.
\newblock Springer Publishing Company, Incorporated, 2014.

\bibitem{DBLP:journals/siamdm/ConfortiSW07}
Michele Conforti, Marco~Di Summa, and Laurence~A. Wolsey.
\newblock The mixing set with flows.
\newblock {\em {SIAM} J. Discrete Math.}, 21(2):396--407, 2007.

\bibitem{DBLP:conf/ipco/ConfortiSW08}
Michele Conforti, Marco~Di Summa, and Laurence~A. Wolsey.
\newblock The mixing set with divisible capacities.
\newblock In {\em Proc. {IPCO} 2008}, pages 435--449, 2008.

\bibitem{DBLP:journals/orl/ConfortiZ09}
Michele Conforti and Giacomo Zambelli.
\newblock The mixing set with divisible capacities: {A} simple approach.
\newblock {\em Oper. Res. Lett.}, 37(6):379--383, 2009.

\bibitem{arXiv/2002.07745}
Jana Cslovjecsek, Friedrich Eisenbrand, Christoph Hunkenschröder, Lars
  Rohwedder, and Robert Weismantel.
\newblock Block-structured integer and linear programming in strongly
  polynomial and near linear time, 2020 (Manuscript).

\bibitem{DBLP:conf/esa/EisenbrandKMNNSVW10}
Friedrich Eisenbrand, Karthikeyan Kesavan, Raju~S. Mattikalli, Martin Niemeier,
  Arnold~W. Nordsieck, Martin Skutella, Jos{\'{e}} Verschae, and Andreas Wiese.
\newblock Solving an avionics real-time scheduling problem by advanced
  ip-methods.
\newblock In Mark de~Berg and Ulrich Meyer, editors, {\em Proc. {ESA} 2010},
  volume 6346 of {\em Lecture Notes in Computer Science}, pages 11--22.
  Springer, 2010.

\bibitem{DBLP:conf/ipco/EisenbrandR01}
Friedrich Eisenbrand and G{\"{u}}nter Rote.
\newblock Fast 2-variable integer programming.
\newblock In {\em Proc. {IPCO} 2001}, pages 78--89, 2001.

\bibitem{DBLP:conf/approx/EisenbrandR09}
Friedrich Eisenbrand and Thomas Rothvo{\ss}.
\newblock New hardness results for diophantine approximation.
\newblock In {\em Proc. {APPROX} 2009}, pages 98--110, 2009.

\bibitem{DBLP:conf/stoc/GoldreichRS99}
Oded Goldreich, Dana Ron, and Madhu Sudan.
\newblock Chinese remaindering with errors.
\newblock In {\em Proc. {STOC} 1999}, pages 225--234, 1999.

\bibitem{DBLP:journals/mp/GunlukP01}
Oktay G{\"{u}}nl{\"{u}}k and Yves Pochet.
\newblock Mixing mixed-integer inequalities.
\newblock {\em Math. Program.}, 90(3):429--457, 2001.

\bibitem{DBLP:conf/focs/GuruswamiSS00}
Venkatesan Guruswami, Amit Sahai, and Madhu Sudan.
\newblock "soft-decision" decoding of chinese remainder codes.
\newblock In {\em Proc. {FOCS} 2000}, pages 159--168, 2000.

\bibitem{DBLP:journals/mp/HemmeckeOR13}
Raymond Hemmecke, Shmuel Onn, and Lyubov Romanchuk.
\newblock n-fold integer programming in cubic time.
\newblock {\em Math. Program.}, 137(1-2):325--341, 2013.

\bibitem{DBLP:journals/mp/HemmeckeS03}
Raymond Hemmecke and R{\"{u}}diger Schultz.
\newblock Decomposition of test sets in stochastic integer programming.
\newblock {\em Math. Program.}, 94(2-3):323--341, 2003.

\bibitem{DBLP:conf/icalp/JansenLR19}
Klaus Jansen, Alexandra Lassota, and Lars Rohwedder.
\newblock Near-linear time algorithm for n-fold ilps via color coding.
\newblock In {\em Proc. {ICALP} 2019}, pages 75:1--75:13, 2019.

\bibitem{DBLP:conf/ipco/Klein20}
Kim{-}Manuel Klein.
\newblock About the complexity of two-stage stochastic ips.
\newblock In Daniel Bienstock and Giacomo Zambelli, editors, {\em Proc. {IPCO}
  2020}, volume 12125 of {\em Lecture Notes in Computer Science}, pages
  252--265. Springer, 2020.

\bibitem{DBLP:books/aw/Knuth81}
Donald~E. Knuth.
\newblock {\em The Art of Computer Programming, Volume {II:} Seminumerical
  Algorithms, 2nd Edition}.
\newblock Addison-Wesley, 1981.

\bibitem{DBLP:journals/siamcomp/Lagarias85}
J.~C. Lagarias.
\newblock The computational complexity of simultaneous diophantine
  approximation problems.
\newblock {\em {SIAM} J. Comput.}, 14(1):196--209, 1985.

\bibitem{DBLP:journals/pe/LeungW82}
Joseph~Y.{-}T. Leung and Jennifer Whitehead.
\newblock On the complexity of fixed-priority scheduling of periodic, real-time
  tasks.
\newblock {\em Perform. Evaluation}, 2(4):237--250, 1982.

\bibitem{DBLP:journals/mp/MillerW03}
Andrew~J. Miller and Laurence~A. Wolsey.
\newblock Tight formulations for some simple mixed integer programs and convex
  objective integer programs.
\newblock {\em Math. Program.}, 98(1-3):73--88, 2003.

\bibitem{arXiv/1912.01161}
Thi Huyen~Chau Nguyen, Werner Grass, and Klaus Jansen.
\newblock Exact polynomial time algorithm for the response time analysis of
  harmonic tasks with constrained release jitter, 2019 (Manuscript).

\bibitem{Pochet:2006:PPM:1202598}
Yves Pochet and Laurence~A. Wolsey.
\newblock {\em Production Planning by Mixed Integer Programming (Springer
  Series in Operations Research and Financial Engineering)}.
\newblock Springer-Verlag, Berlin, Heidelberg, 2006.

\bibitem{DBLP:conf/rtas/ShihGGCS03}
Chi{-}Sheng Shih, Sathish Gopalakrishnan, Phanindra Ganti, Marco Caccamo, and
  Lui Sha.
\newblock Template-based real-time dwell scheduling with energy constraint.
\newblock In {\em Proc. {RTAS} 2003}, page~19. {IEEE} Computer Society, 2003.

\bibitem{d7c4f1a1-dce5-4ef9-a888-3089d808cec5}
Yang Xu, Anton Cervin, and Karl-Erik Årzén.
\newblock Lqg-based scheduling and control co-design using harmonic task
  periods.
\newblock Technical report, Department of Automatic Control, Lund Institute of
  Technology, Lund University, 08 2016.

\bibitem{DBLP:journals/mp/ZhaoF08}
Ming Zhao and Ismael~R. de~Farias~Jr.
\newblock The mixing-mir set with divisible capacities.
\newblock {\em Math. Program.}, 115(1):73--103, 2008.

\end{thebibliography}
	
\appendix\renewcommand{\thesection}{\Alph{section}}

\section{Hardness of BMS}

We reduce from the problem of \textsc{Directed Diophantine Approximation} with \emph{rounding down}. For any vector $v \in \mathbb{R}^n$ let $\floor{v}$ denote the vector where each component is rounded down, i.e. $(\floor{v})_i = \floor{v_i}$ for all $i\leq n$.

\begin{rectangle}
	\textsc{Directed Diophantine Approximation} with rounding down ($\op{DDA}^{\downarrow}$)\\
	Given: $\alpha_1,\dots,\alpha_n \in \mathbb{Q}_+$, $N \in \integers_{\geq 1}$, $\eps \in \mathbb{Q}$, $0 < \eps < 1$\\
	Decide whether there is a $Q \in \set{1,\dots,N}$ such that
	$\norm{Q\alpha - \floor{Q\alpha}}_{\infty} \leq \eps$.
\end{rectangle}

Eisenbrand and Rothvoß proved that $\op{DDA}^{\downarrow}$ is NP-hard \cite{DBLP:conf/approx/EisenbrandR09}. In fact, every instance of $\op{DDA}^{\downarrow}$ can be expressed as a BMS instance, which yields the following theorem.

\begin{theorem}\label{blaaaa}
	$\op{BMS}$ is \NP-hard (even if $b_i = 0$ for all $i$ with $a_i \neq 0$).
\end{theorem}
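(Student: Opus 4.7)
The plan is to reduce $\op{DDA}^{\downarrow}$ to $\op{BMS}$ in polynomial time. Given an instance with rationals $\alpha_i = p_i/q_i$ in lowest terms (so $\gcd(p_i,q_i) = 1$), a bound $N$, and accuracy $\eps \in (0,1)$, I would introduce the products $P = \prod_{j=1}^n p_j$ and $P_i = P/p_i = \prod_{j\neq i} p_j$, and build a BMS instance in which the single variable $s$ is intended to play the role of $Q \cdot P$.

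The construction uses three groups of constraints. First, a single zero-capacity constraint with $a = 0$ and interval $[1, NP]$, which forces $s \in [1, NP]$. Second, a constraint with $a = P$, $b = 0$, $B = 0$, which forces $P \mid s$; together with the first group this means $s = QP$ for some $Q \in \{1,\dots,N\}$. Third, for each $i \leq n$, a constraint with $a_i = q_i P_i$, $b_i = 0$, and $B_i = \lfloor \eps q_i \rfloor \cdot P_i$. Every $b_i$ with $a_i \neq 0$ is zero, as the theorem demands, and since $\log a_i \leq \log q_i + \sum_j \log p_j$ the reduction is polynomial in the binary input size.

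For the equivalence I would write $s = QP$ and observe that $s/P_i = Qp_i$ is integral since $P_i \mid P$. The $i$-th constraint $s + q_i P_i x_i \in [0, \lfloor \eps q_i \rfloor P_i]$ is equivalent to $s \bmod (q_i P_i) \in [0, \lfloor \eps q_i \rfloor P_i]$, because the interval length is strictly smaller than the modulus $a_i = q_i P_i$, leaving a unique candidate $x_i$. Since $P_i \mid s$, this in turn reduces to $(s/P_i) \bmod q_i \in [0, \lfloor \eps q_i \rfloor]$, i.e.\ $Qp_i \bmod q_i \leq \lfloor \eps q_i \rfloor$, which is exactly the fractional-part condition $\{Q\alpha_i\} \leq \eps$. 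Thus $s$ is BMS-feasible if and only if $Q = s/P \in \{1,\dots,N\}$ solves $\op{DDA}^{\downarrow}$, completing the reduction and transferring \NP-hardness from Eisenbrand--Rothvo\ss.

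The main obstacle is the encoding itself. Because $p_i$ is coprime to $q_i$, the set of $Q$ satisfying $\{Q\alpha_i\} \leq \eps$ forms an arithmetic progression $\{0, p_i^{-1}, 2p_i^{-1}, \dots, \lfloor \eps q_i \rfloor \, p_i^{-1}\}$ modulo $q_i$ and not a modular interval, so the naive identification $s = Q$ cannot be captured by a single BMS constraint. The decisive idea is to inflate $s$ to $QP$ and to use $a_i = q_i P_i$ instead of $q_i$: the scattered progression on $Q$ then lifts to a genuine interval $[0, \lfloor \eps q_i \rfloor P_i]$ of multiples of $P_i$ on $s$, which fits a single BMS constraint. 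Once this scaling is in place, everything else is the routine modular calculation sketched above.
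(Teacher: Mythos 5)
Your reduction is correct and is essentially the paper's own: you scale $s = QP$ with $P=\prod_j p_j$, use capacities $q_i\prod_{j\neq i}p_j = \lambda/\alpha_i$, a zero-capacity constraint for the range $[1,NP]$, and a capacity-$P$ constraint with interval $[0,0]$ to force $P\mid s$, exactly as in the paper (which writes $\lambda=\prod_j\beta_j$ and $Q'=\lambda Q$). The only cosmetic difference is your upper bound $P_i\lfloor \eps q_i\rfloor$ versus the paper's $\lfloor(\lambda/\alpha_i)\eps\rfloor$, which are interchangeable here since the relevant residues are integer multiples of $P_i$.
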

\begin{proof}
    Write $\alpha_i = \beta_i/\gamma_i$ for integers $\beta_i\geq 0,\gamma_i\geq 1$ and set $\lambda = \prod_j \beta_j$. Then $\lambda/\alpha_i = (\lambda/\beta_i) \gamma_i \geq 0$ is integer. Let $\mathcal{M}$ denote the following instance of BMS:
    \begin{align}
    	\qquad\qquad 0 \;\; &\leq Q' - (\lambda/\alpha_i) \cdot  y_i &&\leq && \floor{(\lambda / \alpha_i) \cdot \varepsilon}\qquad && \forall i = 1,\dots,n\qquad\;\;\label{hardness-BMS-first}\\
    	\lambda \;\; &\leq Q' - 0\cdot y_{n+1} &&\leq && \lambda \cdot N\label{hardness-BMS-second}\\
    	0 \;\; &\leq Q' - \lambda \cdot y_{n+2} &&\leq && 0\label{hardness-BMS-third}\\
    	&&&&&Q',y_i \in \mathbb{Z} && \forall i=1,\dots,n+2\notag
    \end{align}
    
    So let $Q\in \set{1,\dots,N}$ with $\norm{Q\alpha - \floor{Q\alpha}}_{\infty} \leq \varepsilon$ be given. We obtain readily that $Q' = \lambda Q$ and $y = (\floor{Q\alpha_1},\dots,\floor{Q\alpha_n},0,Q)$ defines a solution of $\mathcal{M}$.
    		
    \noindent Vice-versa let $(Q',y)$ be a solution to $\mathcal{M}$. We see that \eqref{hardness-BMS-first} implies that
    \[0 \leq Q' - (\lambda/\alpha_i) \cdot  y_i \leq  \floor{(\lambda/\alpha_i) \cdot \varepsilon} \leq (\lambda/\alpha_i) \cdot \varepsilon\]
    and by \eqref{hardness-BMS-third} we get $Q' = \lambda \cdot y_{n+2}$ which then implies $0 \leq y_{n+2}\alpha_i - y_i \leq \varepsilon < 1$ for all $i\leq n$. Now, since $y_i$ is integer, there can be only one value for $y_i$, i.e. $y_i = \floor{y_{n+2}\alpha_i}$. By $Q' = \lambda \cdot y_{n+2}$ and \eqref{hardness-BMS-second} we get $y_{n+2} \in \{1,\dots,N\}$ and by setting $Q = y_{n+2}$ this yields $\norm{Q\alpha - \floor{Q\alpha}}_{\infty} \leq \varepsilon$ and that proves the claim.
\end{proof}

\section{Smallest Feasible \texorpdfstring{\textbf{$s$}}{\emph{s}} for Mixing Set}
\begin{lemma}\label{mixing-set-minimize-s-is-simple}
    For $f(s,x) = s$ the mixing set \eqref{mixing-set} can be solved in linear time.
\end{lemma}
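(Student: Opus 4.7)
The plan is to observe that, once the objective is $f(s,x)=s$, the variables $x_i$ decouple from the minimization: whenever $a_i \neq 0$, the constraint $s + a_i x_i \geq b_i$ imposes no restriction on $s$, because there is no upper bound on $s + a_i x_i$ and an integer $x_i$ of appropriately large magnitude (and matching sign) always witnesses feasibility. Therefore only the constraints with $a_i = 0$, which collapse to $s \geq b_i$, together with $s \geq 0$, actually constrain $s$.

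First I would verify the decoupling by a one-line case distinction on the sign of $a_i$: for every $s \in \mathbb{Z}_{\geq 0}$ and every $i$ with $a_i \neq 0$, the integer value
\[
x_i \;=\; \begin{cases} \left\lceil (b_i - s)/a_i \right\rceil & \text{if } a_i > 0, \\ \left\lfloor (b_i - s)/a_i \right\rfloor & \text{if } a_i < 0, \end{cases}
\]
satisfies $s + a_i x_i \geq b_i$. Thus these constraints can be discharged independently of the choice of $s$.

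Next I would conclude that
\[
s_{\min} \;=\; \max\Bigl\{\,0,\ \max_{i\,:\,a_i = 0} \lceil b_i \rceil\,\Bigr\},
\]
because $s \geq b_i$ is forced exactly at indices $i$ with $a_i = 0$ and all remaining constraints are satisfiable for arbitrary $s \in \mathbb{Z}_{\geq 0}$ by the formula above. Computing this maximum takes a single pass through the $n$ constraints, and a second pass recovers a witness $x$ using the explicit formula, giving total running time $\oh(n)$.

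No step should be difficult; the only mild pitfall is handling the rational/negative $a_i$ cases cleanly, which is why I would state the $a_i > 0$ and $a_i < 0$ subcases separately at the start rather than trying to write a single uniform expression. The result is essentially immediate once one notices that the mixing-set constraints are one-sided in the $x_i$ direction, and that this asymmetry is exactly what allows a trivial closed form for the minimum value of $s$.
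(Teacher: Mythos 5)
Your proposal is correct and follows essentially the same route as the paper: discharge every constraint with $a_i\neq 0$ via the sign-dependent rounding choice of $x_i$, and read off $s_{\min}$ as the maximum of $0$ and the lower bounds $b_i$ at indices with $a_i=0$. Your explicit $\lceil b_i\rceil$ is in fact a slightly cleaner statement than the paper's, since $b_i$ may be rational while $s$ must be an integer.
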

\begin{proof}
	We show that $s_{\min} = s^* := \max(\set{0} \cup \set{b_i|a_i = 0})$ where $s_{\min}$ denotes the optimal solution to \eqref{mixing-set} for $f(s,x) = s$. Let $i \leq n$.

	\noindent\textbf{Case} $s^* \geq b_i$. Set $x_i^* = 0$. Then we have $s^* + a_i x_i^* = s^* \geq b_i$.

	\noindent\textbf{Case} $s^* < b_i$.
	Then $a_i \neq 0$ and $b_i - s^* > 0$.
	We set $x_i^* = \ceil{\tfrac1{a_i}(b_i-s^*)}$ if $a_i > 0$ and $x_i^* =
	\floor{\tfrac1{a_i}(b_i-s^*)}$ if $a_i < 0$.
    Again we get that $s^* + a_i x_i^* \geq b_i$.

	Hence, $s^*$ is a solution. Apparently $s^*$ is optimal if $s^* = 0$. If $s^* > 0$ then there is a constraint $j$ with $a_j = 0$ such that $s^* = b_j \leq s_{\min}+a_jx_j = s_{\min}$ for any $x_j$.
\end{proof}


\end{document}